\documentclass[12pt,letterpaper]{article}
\usepackage[T1]{fontenc}
\usepackage{lmodern}
\usepackage[margin=1in]{geometry}
\usepackage{microtype}
\usepackage{amsmath,amssymb,amsthm,mathtools}
\usepackage{booktabs,array}
\usepackage{graphicx}
\usepackage{enumitem}
\usepackage{float}
\usepackage{titlesec}
\titleformat{\section}{\large\bfseries}{\thesection}{1em}{}
\titleformat{\subsection}{\normalsize\bfseries}{\thesubsection}{1em}{}
\usepackage[authoryear,round]{natbib}

\usepackage{url}
\usepackage{subcaption}
\usepackage[hidelinks]{hyperref}

\newtheorem{theorem}{Theorem}

\newtheorem{lemma}[theorem]{Lemma}
\newtheorem{remark}{Remark}
\newtheorem{assumption}{Assumption}
\newtheorem{corollary}[theorem]{Corollary}

\newcommand{\E}{\mathbb{E}}
\newcommand{\Cov}{\mathrm{Cov}}
\newcommand{\R}{\mathbb{R}}


\title{Beyond Point Predictions: Distribution-Informed Prediction-Powered Inference}
\author{
	{Maoyu Zhang$^{1}$,  Jingfei Zhang$^2$ and Xuming He$^{1}$}
	\vspace{1.6mm}\\
	\fontsize{11}{10}\selectfont\itshape $^1$\,Department of Statistics and Data Science, Washington University in St. Louis, St. Louis, MO, USA. \\
	\fontsize{11}{10}\selectfont\itshape $^2$\,Goizueta Business School, Emory University, Atlanta, GA, USA. }

\date{}

\begin{document}
\maketitle

\begin{abstract}

Prediction-powered inference (PPI) typically relies on point predictions on unlabeled data. When predictive distributions are available as in a wide range of applications, including predictions from LLMs, we introduce \textit{distribution-informed prediction-powered inference} (DiPPI), a general framework for further improving statistical efficiency by using predictive distributions as auxiliary information in the spirit of PPI. We characterize the optimal use of this information through score calibration, 
{derive the oracle efficiency for a finite-dimensional representation of the predictive distribution,} and provide theoretical guarantees for positive learning with the cross-fitted DiPPI estimator. Through simulations and three real-data applications, we show that DiPPI achieves better efficiency than PPI methods based on point predictions. These gains arise when the predictive distribution contains score-relevant information that is partially lost in point predictions.
\end{abstract}

\section{Introduction}
\label{sec:intro}

In many scientific {and social-economic} studies,  {reliable outcome labeling or measurements} often require costly experiments or human annotation. As a result, responses and covariates are jointly observed in a small labeled sample, whereas covariates alone are available for a much larger unlabeled sample. Pretrained machine learning models, including large language models (LLMs), can predict {or impute} responses from the available covariates and additional information, but treating these predictions as observed responses can lead to biased estimates and invalid confidence intervals. Prediction-powered inference \citep[PPI;][]{ppi} addresses this problem by combining predictions on the unlabeled sample with a correction estimated from the labeled sample, yielding valid inference on population parameters even when the predictions are biased. Using predictions in this way can improve efficiency relative to the estimator based on the labeled sample alone. {A rapidly growing literature has since developed adaptive weighting, recalibration and other strategies to further improve the efficiency of prediction-powered inference \citep{ppiplus,pspa,pdc,reppi}, among others.}

Existing PPI methods typically use a point prediction of the response to construct the correction and to improve statistical efficiency. However, prediction systems such as LLMs are designed to provide a predictive distribution for each case, that is, a probability distribution over the response space. For example, an LLM used to predict a product rating from review text may provide a probability distribution over the possible ratings. Reducing such a predictive distribution to a point prediction, such as its mean or mode, may discard information relevant to estimation of the population parameter of interest. This suggests shifting the focus of prediction-powered inference from point predictions to the richer information contained in predictive distributions. 

To illustrate, suppose we wish to estimate a school district coefficient in a linear regression of home sale price (per square foot) on school district and other property characteristics, where the coefficient captures the school district premium in property values. An LLM may use a property’s address to retrieve listing descriptions and recent sales records for comparable properties in the area, and then produce a predictive distribution for the property price. Two properties with the same observed covariates in the regression may have different predictive distributions because the retrievable information differs. If these distributional differences contain information about the expected sale price beyond the regression covariates and the point prediction, they can improve statistical efficiency within the PPI framework. %
{More broadly, this motivates a principled shift in PPI methodology from relying on point predictions for unlabeled data toward exploiting predictive distributions, or their low-dimensional summaries, as richer auxiliary information.}

In this paper, we propose distribution-informed prediction-powered inference (DiPPI), a {conceptually simple approach, in the spirit of PPI,} that uses predictive distributions as auxiliary information for inference on parameters defined by estimating equations. 
{Although the idea behind DiPPI is conceptually natural, we are not aware of any existing PPI variant that adopts it. One reason may be the lack of a straightforward way to exploit the predictive distribution as a whole.
To address this challenge, DiPPI builds on 
recalibrated prediction-powered inference \citep[RePPI;][]{reppi} %
and uses} the labeled sample to estimate the conditional expectation of the estimating function %
given the covariates and predictive distribution. %
The fitted conditional expectation is then evaluated for both labeled and unlabeled observations and used to construct the DiPPI estimating equation.

{The shift from point predictions to predictive distributions requires rigorous justification. We provide asymptotic results and numerical experiments to address when and why the proposed DiPPI yields meaningful efficiency gains over existing methods.} %
For the DiPPI estimator, we establish asymptotic normality and provide a consistent covariance estimator for confidence intervals. Under appropriate regularity conditions, DiPPI is asymptotically no less efficient than labeled-only estimation, even when the conditional expectation is not estimated consistently. When the conditional expectation is estimated consistently, DiPPI achieves the smallest asymptotic covariance among estimators within our estimating-equation framework. In other words, DiPPI shares two desirable asymptotic properties with RePPI: (i) it is never less efficient than the analysis based on the labeled sample alone; and (ii) if the conditional expectation is estimated consistently, it achieves the minimum asymptotic variance among all prediction-powered estimators in the class under consideration. Moreover, when the predictive distributions capture (some level of) heterogeneity in the uncertainty of the point predictions, DiPPI delivers additional efficiency gains over RePPI.

We evaluate DiPPI through simulations and three real-data applications: estimating the log-price coefficient for wine ratings~\citep{reppi}, slope coefficients in multinomial logistic regression of forest cover types~\citep{covertype}, and the hedging-indicator coefficient for politeness ratings~\citep{danescu2013politeness}. Wine and Politeness use predictive rating distributions from GPT-5.4 nano, while Forest uses class-probability predictions from a held-fixed XGBoost classifier~\citep{xgboost}. Across applications and labeled sample sizes, DiPPI produces shorter confidence intervals than PPI++ \citep{ppiplus}, RePPI \citep{reppi} and labeled-only estimation, with coverage close to the nominal level.

\subsection{Related Work}
\label{sec:related}
PPI \citep{ppi} combines model predictions on a large unlabeled sample with a correction estimated from a small labeled sample, following earlier work on post-prediction inference \citep{wang2020methods}. The literature on PPI is expanding rapidly, and we focus here on developments most closely related to our setting. 
Subsequent work improves efficiency through power tuning \citep{ppiplus}, prediction de-correlation \citep{pdc}, and adaptive weighting \citep{pspa}, post-hoc regression and calibration \citep{regressionmean,van2026calibeating}, empirical likelihood with auxiliary moment conditions \citep{epi}, among others. Closest to our work, RePPI~\citep{reppi} estimates the conditional expectation of the estimating function given covariates and a point prediction.
These methods typically use point predictions as auxiliary information and are closely related to classical control-variate methods and estimation with surrogate outcomes or partially observed responses \citep{robins1994,chenchen2000,chen2008,anotherlook}.

\section{Distribution-Informed Prediction-Powered Inference (DiPPI)}
\label{sec:framework}

Let $X\in\mathcal X\subseteq\mathbb R^p$ denote a covariate vector and $Y\in\mathcal Y\subseteq\mathbb R$ a response. For a known estimating function $U_\theta:\mathcal X\times\mathcal Y\to\mathbb R^d$, the parameter of interest $\theta^\star\in\Theta\subseteq\mathbb R^d$ satisfies
$$
\mathbb E[U_{\theta^\star}(X,Y)]=0,
$$
where the expectation is taken under the joint distribution of $(X,Y)$. For example, $U_{\theta^\star}(X,Y)$ may be a score function $\nabla_\theta\log p_\theta(Y|X)$ for a conditional probability mass function or density $p_\theta$.

A fixed prediction system $\mathcal A$ uses $(X,Z)$ to produce a predictive distribution $Q=\mathcal A(X,Z)\in\mathcal P(\mathcal Y)$, where $\mathcal P(\mathcal Y)$ denotes the space of probability distributions on $\mathcal Y$ and $Z$ denotes additional supplied or
retrieved inputs. For example, $Q$ may specify probabilities over possible 1-10 ratings or a predictive distribution for a continuous response. A point prediction, such as the mean or mode of $Q$, summarizes this distribution but may discard information relevant to the target estimating equation. Our framework uses $Q$ as auxiliary information
and does not require it to equal the conditional distribution of $Y$ given $(X,Z)$.

We observe a labeled sample of size $n$ and an unlabeled sample of size $N$,
\[
    \mathcal D_L
    =
    \{(X_i,Y_i,Q_i):i=1,\ldots,n\},
    \qquad
    \mathcal D_U
    =
    \{(X_j,Q_j):j=n+1,\ldots,n+N\},
\]
where $Q_i=\mathcal A(X_i,Z_i)$. 
We assume that the labeled observations are i.i.d. from the joint distribution of $(X,Y,Q)$, that the unlabeled observations are i.i.d. from the joint distribution of $(X,Q)$, and that the two samples are independent. Thus, both samples share the same marginal distribution of $(X,Q)$, while responses are observed only in $\mathcal D_L$.

\begin{remark}
The inputs $Z$ may be available to us but difficult to model directly, or accessible only to the prediction system. For example, in a regression of wine ratings, $Z$ may contain a textual wine description supplied to an LLM. In a regression of housing price, $Z$ may contain listing descriptions and historical property records retrieved using an address not included in $X$. Our procedure uses $Z$ only through $Q$, without requiring direct access to or modeling of $Z$.
\end{remark}

\subsection{Distribution-informed estimation}
Estimating $\theta^\star$ requires estimating $\mathbb E\{U_\theta(X,Y)\}$ as
a function of $\theta$. To use the unlabeled sample for this purpose, we use the predictive distribution $Q$ and $X$ to predict $U_\theta(X,Y)$.  
Let $h_\theta:\mathcal X\times\mathcal P(\mathcal Y)\to\mathbb R^d$ be a fixed prediction function. We write
$$
\mathbb E[U_\theta(X,Y)]=\mathbb E[U_\theta(X,Y)-h_\theta(X,Q)]+\mathbb E[h_\theta(X,Q)].
$$
Since $(X,Q)$ is available in both samples, the second term can be estimated from all $n+N$ observations, whereas the first requires $Y$ and can be estimated only from the $n$ labeled observations. We therefore estimate $\theta^\star$ by solving
\begin{equation}\label{eq:ee}
\Psi_{n,N}(\theta;h)=\frac{1}{n}\sum_{i=1}^{n}\{U_\theta(X_i,Y_i)-h_\theta(X_i,Q_i)\}+
\frac{1}{n+N}\sum_{k=1}^{n+N}h_\theta(X_k,Q_k)=0.
\end{equation}
Under our sampling assumptions, the left-hand side of \eqref{eq:ee} has expectation $\mathbb E\{U_\theta(X,Y)\}$ for every $h_\theta$, without requiring $Q$ to be the conditional distribution of $Y$ given $(X,Z)$. Taking $h_\theta=0$ recovers the labeled-only estimating equation.

The choice of $h_\theta$ affects the efficiency of the resulting estimator, denoted as $\widehat\theta_h$.
The following result gives the asymptotic covariance of $\widehat\theta_h$ and identifies the optimal $h_\theta$ that minimizes this covariance under two regularity conditions. 

\begin{assumption}
\label{assump:basic}
There exists $\rho>0$ such that
$
\mathcal B_\rho
:=
\{\theta\in\mathbb R^d:\|\theta-\theta^\star\|\leq\rho\}
\subset\operatorname{int}(\Theta).
$
The function $U_\theta(X,Y)$ is continuously differentiable in $\theta$ on
$\mathcal B_\rho$. Moreover,
$
H_{\theta^\star}
=
\E\{\nabla_\theta U_{\theta^\star}(X,Y)\}
$
is nonsingular, and
$
\E\!\left[
\|U_{\theta^\star}(X,Y)\|^2
+
\sup_{\theta\in\mathcal B_\rho}
\|\nabla_\theta U_\theta(X,Y)\|
\right]
<\infty.
$
\end{assumption}

\begin{assumption}
\label{assump:fixed-correction}
The function $h_\theta(X,Q)$ is continuously differentiable in $\theta$ on
$\mathcal B_\rho$, and
$
\E\!\left[
\|h_{\theta^\star}(X,Q)\|^2
+
\sup_{\theta\in\mathcal B_\rho}
\|\nabla_\theta h_\theta(X,Q)\|
\right]
<\infty.
$
\end{assumption}

\begin{theorem}
\label{thm:fixed-correction}
Suppose Assumptions \ref{assump:basic}-\ref{assump:fixed-correction} hold and $n/N\to r\in(0,\infty)$. Then there
exists $a\in(0,\rho)$ such that, with probability tending to one,
$\Psi_{n,N}(\theta;h)=0$ has a unique root $\widehat\theta_h$ in
$\mathcal B_a=\{\theta:\|\theta-\theta^\star\|\leq a\}$, 
and this root belongs to $\operatorname{int}(\mathcal B_a)$. Moreover,
\begin{equation}
\begin{gathered}
\sqrt n(\widehat\theta_h-\theta^\star)
\xrightarrow{d}
N(0,\Sigma_h),\\
\Sigma_h=
H_{\theta^\star}^{-1}
\left[
\frac{1}{1+r}\Cov\{U_{\theta^\star}(X,Y)-h_{\theta^\star}(X,Q)\}+
\frac{r}{1+r}\Cov\{U_{\theta^\star}(X,Y)\}
\right]
H_{\theta^\star}^{-\top}.
\end{gathered}
\label{eq:general-clt}
\end{equation}
Among all $h_{\theta}$ satisfying Assumption \ref{assump:fixed-correction}, $\Sigma_h$ is minimized by any $h^\star$ satisfying
\begin{equation}
h^\star_{\theta^\star}(X,Q)
=\E\{U_{\theta^\star}(X,Y)\mid X,Q\}.
\label{eq:oracle-h-Q}
\end{equation}
The corresponding minimum covariance is
\begin{equation}
\begin{split}
\Sigma_{h^\star}
={}&
H_{\theta^\star}^{-1}
\Bigg[
\Cov\{U_{\theta^\star}(X,Y)\}-
\frac{1}{1+r}
\Cov\!\left\{
\E[U_{\theta^\star}(X,Y)\mid X,Q]
\right\}
\Bigg]
H_{\theta^\star}^{-\top}.
\end{split}
\label{eq:oracle-covariance}
\end{equation}
\end{theorem}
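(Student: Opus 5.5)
Three steps: existence and uniqueness of a local root, a linear expansion, then the optimization over $h$.

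\emph{Step 1: existence and uniqueness of a root.} Write $\Psi_{n,N}(\theta;h)=\Psi^{(1)}_n(\theta)+\Psi^{(2)}_{n+N}(\theta)$, where $\Psi^{(1)}_n$ is the labeled average of $U_\theta-h_\theta$ and $\Psi^{(2)}_{n+N}$ is the pooled average of $h_\theta$. Its derivative is
\[
\nabla_\theta\Psi_{n,N}(\theta)=n^{-1}\sum_{i\le n}\{\nabla U_\theta-\nabla h_\theta\}+(n+N)^{-1}\sum_{k}\nabla h_\theta .
\]
Assumptions \ref{assump:basic}--\ref{assump:fixed-correction} give continuity and an integrable envelope for $\nabla U_\theta,\nabla h_\theta$ on $\mathcal B_\rho$. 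A uniform law of large numbers for each average then gives
\[
\sup_{\theta\in\mathcal B_\rho}\|\nabla\Psi_{n,N}(\theta)-H_\theta\|\to0
\]
in probability, with $H_\theta=\E\nabla U_\theta$, because the $\nabla h$ terms cancel in expectation. By dominated convergence $H_\theta$ is continuous, so we may choose $a$ such that $\|H_\theta-H_{\theta^\star}\|\le \|H_{\theta^\star}^{-1}\|^{-1}/4$ on $\mathcal B_a$. Also $\Psi_{n,N}(\theta^\star)=O_p(n^{-1/2})$, by the CLT under the second-moment conditions.

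Consider the Newton-type map $T(\theta)=\theta-H_{\theta^\star}^{-1}\Psi_{n,N}(\theta)$. With probability tending to one it is a contraction on $\mathcal B_a$ with constant at most $1/2$, and it maps $\mathcal B_a$ into $\operatorname{int}\mathcal B_a$ because $\|T(\theta^\star)-\theta^\star\|=o_p(1)$. The Banach fixed-point theorem then gives a unique root $\widehat\theta_h$, which lies in the interior of $\mathcal B_a$; from the contraction bound, $\|\widehat\theta_h-\theta^\star\|=O_p(n^{-1/2})$. I expect this step to be the main technical obstacle. The delicate points are uniform convergence of the Jacobian under only an envelope condition, and verifying that $\Psi_{n,N}$ is differentiable for both samples simultaneously.

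\emph{Step 2: linear expansion and the CLT.} A mean-value expansion, row by row, gives
\[
\sqrt n(\widehat\theta_h-\theta^\star)=-H_{\theta^\star}^{-1}\sqrt n\,\Psi_{n,N}(\theta^\star)+o_p(1).
\]
Rewrite
\[
\Psi_{n,N}(\theta^\star)=\Big(1-\frac{n}{n+N}\Big)\frac1n\sum_{i\le n}(U-h)+\frac{n}{n+N}\frac1n\sum_{i\le n}U+\frac{1}{n+N}\sum_{j>n}h,
\]
all evaluated at $\theta^\star$. Equivalently, with $w=n/(n+N)\to r/(1+r)$, the labeled block contributes $(1-w)(U-h)+wU$ and the independent unlabeled block contributes $h$ weighted by $N/(n+N)$.

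The two samples are independent, so the variances add. The labeled part contributes
\[
\Cov\{(1-w)(U-h)+wU\}.
\]
The unlabeled part, after scaling by $\sqrt n$, contributes $\frac{nN}{(n+N)^2}\Cov(h)\to \frac{r}{(1+r)^2}\Cov(h)$. Expand both using $a=U-h$ and $b=U$, and the identity
\[
\Cov(h)=\Cov(a)+\Cov(b)-\Cov(a,b)-\Cov(b,a).
\]
The cross terms cancel, and the total limit is
\[
\frac{1}{1+r}\Cov(U-h)+\frac{r}{1+r}\Cov(U).
\]
This yields \eqref{eq:general-clt} via Slutsky's theorem.

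\emph{Step 3: optimality.} Only $\Cov(U-h)$ depends on $h$. Let $m=\E[U\mid X,Q]$. Then $U-m$ is orthogonal to every square-integrable function of $(X,Q)$, so
\[
\Cov(U-h)=\Cov(U-m)+\Cov(m-h)\succeq \Cov(U-m),
\]
with equality when $h_{\theta^\star}=m$. Congruence by $H_{\theta^\star}^{-1}$ preserves the Loewner order. Finally, substitute
\[
\Cov(U-m)=\Cov(U)-\Cov(m),
\]
which gives \eqref{eq:oracle-covariance}, since
\[
\frac1{1+r}\{\Cov U-\Cov m\}+\frac r{1+r}\Cov U=\Cov U-\frac{1}{1+r}\Cov m .
\]
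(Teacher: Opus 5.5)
Your proposal is correct and follows essentially the same route as the paper: the Newton-type map $T(\theta)=\theta-H_{\theta^\star}^{-1}\Psi_{n,N}(\theta;h)$ is made a contraction on $\mathcal B_a$ via a uniform law for the Jacobian, which the paper proves by a finite $\delta$-net argument from continuity of $\nabla_\theta U_\theta$ and the integrable envelope. The rest also matches: linearization with independent labeled and unlabeled CLTs (your $a=U-h$, $b=U$ algebra is equivalent to the paper's identity with $c_n=N/(n+N)$), and the orthogonal decomposition $\Cov(U-h)=\Cov(U-m)+\Cov(m-h)$ for optimality, with only cosmetic differences such as your fixed contraction constant $1/2$ instead of a general $\kappa$ and a row-wise mean-value expansion in place of the paper's integrated Jacobian $\overline H_{n,N}$.
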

Theorem~\ref{thm:fixed-correction} shows that using $Q$ can improve efficiency when
$\mathbb E\{U_{\theta^\star}(X,Y)\mid X,Q\}$ differs from
$\mathbb E\{U_{\theta^\star}(X,Y)\mid X\}$.
This conditional expectation may depend on features of $Q$ beyond its mean, such as shape, variance or tail probabilities. Using these features can therefore improve prediction of the estimating function and reduce the asymptotic covariance. 
We discuss the estimation of $\mathbb E\{U_{\theta^\star}(X,Y)\mid X,Q\}$ in Section \ref{sec:dippi-estimation}.
Taking $h_\theta=0$ recovers labeled-only estimation, while $h_\theta(X,Q)=(1+n/N)U_\theta(X,f(X))$ recovers PPI with point prediction $f(X)$.
Thus, the optimal $h^\star_{\theta^\star}(X,Q)$ in Theorem~\ref{thm:fixed-correction} yields asymptotic covariance no greater than that of either estimator.

\subsection{Beyond point predictions}
\label{sec:distributional-information}

Let $\widehat Y=T(Q)$ denote a point prediction obtained from $Q$ through a summary function $T$. We study when using $Q$ improves asymptotic efficiency. A direct way to predict $U_\theta(X,y)$ using $Q$ is
\[
s_\theta(X,Q)=\int U_\theta(X,y)\,Q(dy).
\]
When $s_\theta$ satisfies Assumption~\ref{assump:fixed-correction}, it can be used as
$h_\theta$ in \eqref{eq:ee}. However, $s_{\theta^\star}(X,Q)$ need not equal the optimal predictor $h^\star_{\theta^\star}(X,Q)=\mathbb E[U_{\theta^\star}(X,Y)\mid X,Q]$. 

For mean estimation and the usual linear and logistic regression
estimating functions,
\[
s_{\theta^\star}(X,Q)=U_{\theta^\star}(X,m(Q)),
\qquad
h^\star_{\theta^\star}(X,Q)
=
U_{\theta^\star}\!\left(X,\mathbb E[Y\mid X,Q]\right),
\]
where $m(Q)=\int y\,Q(dy)$. The two response predictions differ by
\[
\mathbb E[Y\mid X,Q]-m(Q)
=
\mathbb E[Y-m(Q)\mid X,Q],
\]
the conditional mean prediction error of $m(Q)$. Features such as the variance, skewness, or tail probabilities of $Q$ may help predict this error conditional on $(X,m(Q))$.  If $\mathbb E[Y|X,Q]=\mathbb E[Y|X,m(Q)]$, then the optimal prediction of $U_\theta(X,Y)$ based on $(X,Q)$ is also a function of $(X,m(Q))$. In this case, having access to the full predictive distribution gives the same minimum asymptotic covariance as having access to only its mean.

For other estimating functions, $s_\theta$ may depend on features of $Q$ beyond its mean. In multinomial logistic regression with class $K$ as the reference category, let $p_{\theta,k}(X)$ and $q_k$ denote the model and predicted probabilities for class $k$. For $k=1,\ldots,K-1$,
$$
U_{\theta,k}(X,Y)=X\{\mathbf 1(Y=k)-p_{\theta,k}(X)\},\qquad
s_{\theta,k}(X,Q)=X\{q_k-p_{\theta,k}(X)\}.
$$
Hence, $s_\theta$ depends on the class probabilities, which are generally not determined by the predictive mean or modal class; reducing $Q$ to either summary can therefore lose relevant information.

We now compare the minimum asymptotic covariances achievable with $Q$ and $\widehat Y=T(Q)$. Define
\[
h^{\star,T}_{\theta^\star}(X,\widehat Y)
=
\mathbb E[U_{\theta^\star}(X,Y)\mid X,\widehat Y].
\]
By Theorem~\ref{thm:fixed-correction}, this predictor minimizes
the asymptotic covariance among prediction functions based on
$(X,\widehat Y)$. Denote this minimum by $\Sigma_T^\star$.
The corresponding minimum based on $(X,Q)$ is
$\Sigma_{h^\star}$. The law of total covariance gives
\begin{equation}
\Sigma_T^\star-\Sigma_{h^\star}
=
\frac{1}{1+r}H_{\theta^\star}^{-1}
\Cov\!\left\{
h^\star_{\theta^\star}(X,Q)
-
h^{\star,T}_{\theta^\star}(X,\widehat Y)
\right\}
H_{\theta^\star}^{-\top}
\succeq0.
\label{eq:distribution-point-gap}
\end{equation}
Thus, replacing $Q$ by $\widehat Y$ gives no loss of efficiency
if and only if
\[
\mathbb E[U_{\theta^\star}(X,Y)\mid X,Q]
=
\mathbb E[U_{\theta^\star}(X,Y)\mid X,\widehat Y].
\]
The predictor $h^{\star,T}_{\theta^\star}$ is the conditional expectation targeted by RePPI~\citep{reppi}. Thus, \eqref{eq:distribution-point-gap} shows that retaining $Q$
can improve efficiency even relative to a point-based method that consistently estimates this conditional expectation. This comparison assumes that $\widehat Y=T(Q)$ and does not establish an efficiency ordering for separately generated predictions that are not determined by $(X,Q)$.

\subsection{Cross-fitted estimation and matrix calibration}
\label{sec:dippi-estimation}

To use $Q$ as a regression input, we choose a finite-dimensional representation $\Phi(Q)\in\mathbb R^q$. For a discrete response with $K$ possible values, $\Phi(Q)$ can consist of $K-1$ class probabilities. For a continuous response, the choice depends on how the prediction system provides $Q$. An available density or CDF can be represented by basis coefficients or functional principal component scores. If only generated responses are available, these functions must be approximated from repeated samples. In this case, selected moments or quantiles may offer a simpler representation; a practical starting point for $\Phi(Q)$ is the predictive mean, standard deviation, and a few quantiles.

We then estimate $\E\{U_{\theta^\star}(X,Y)\mid X,\Phi(Q)\}$ from the labeled data using three-fold cross-fitting. Because $\theta^\star$ is unknown, one fold provides an initial estimator $\widehat\theta_0$. A second fold is used to construct $\widehat\mu(X,Q)\approx \E\{U_{\widehat\theta_0}(X,Y)\mid X,\Phi(Q)\}$. 
Since an inaccurate estimate of the conditional score can increase variance, we calibrate $\widehat\mu(X,Q)$ by a matrix $M$ chosen to minimize
the asymptotic covariance. This step is related to optimal control-variate and
power-tuning constructions \citep{chenchen2000,ppiplus,pdc,anotherlook}. 
Replacing $\theta^\star$ by $\widehat\theta_0$ and the population covariances in this minimizer by empirical covariances on a third labeled fold gives
\begin{equation}
    \widehat M
    =
    \widehat{\Cov}\!\left\{
    U_{\widehat\theta_0}(X,Y),\widehat\mu(X,Q)
    \right\}
    \widehat{\Cov}\!\left\{\widehat\mu(X,Q)\right\}^{-1}.
    \label{eq:M-hat}
\end{equation}
We refer to this step as \emph{matrix calibration}. If the empirical covariance is singular, its generalized inverse may be used. {{The final estimator is
$\widehat h(X,Q)=\widehat M\widehat\mu(X,Q)$.}}

The procedure is summarized in \hyperlink{alg:dippi}{Algorithm~1}.
We split the labeled sample into three folds $\mathcal D_1,\mathcal D_2,\mathcal D_3$. For each $k\in\{1,2,3\}$, $\mathcal D_k$ is used for calibration
and inference, while the other two folds separately provide
$\widehat\theta_0^k$ and $\widehat\mu^k$.
Hence, $(\widehat\theta_0^k,\widehat\mu^k)$ is independent of $\mathcal D_k$ and the unlabeled sample. We combine the resulting estimates with weights $\omega_k=|\mathcal D_k|/n$. 

\begin{figure}[ht!]
\centering
\begin{minipage}{0.96\linewidth}
\hrule
\vspace{0.4em}
\hypertarget{alg:dippi}{\textbf{Algorithm 1: Distribution-Informed Prediction-Powered Inference (DiPPI)}}
\vspace{0.4em}
\hrule
\vspace{0.5em}

\textbf{Step 1:} Split the labeled sample into three folds $\mathcal D_1$, $\mathcal D_2$, and $\mathcal D_3$.

\textbf{Step 2:} On $\mathcal D_2$, compute $\widehat\theta_0^{1}$ as the
{labeled-only local root} of
$|\mathcal D_2|^{-1}\sum_{i\in\mathcal D_2}U_\theta(X_i,Y_i)=0$.

\textbf{Step 3:} On $\mathcal D_3$, fit $\widehat\mu^{1}(X,Q)\approx\E\{U_{\widehat\theta_0^{1}}(X,Y)\mid X,\Phi(Q)\}$.

\textbf{Step 4:} On $\mathcal D_1$, compute $\widehat M^{1}$ using~\eqref{eq:M-hat}.

\textbf{Step 5:} Using $\mathcal D_1$ and the full unlabeled sample, solve
\[
0=
\frac{1}{|\mathcal D_1|}\sum_{i\in\mathcal D_1}
\Big[
U_\theta(X_i,Y_i)
-\frac{1}{1+n/N}\widehat M^{1}\widehat\mu^{1}(X_i,Q_i)
\Big]
+
\frac{1}{N}\sum_{j=n+1}^{n+N}
\frac{1}{1+n/N}\widehat M^{1}\widehat\mu^{1}(X_j,Q_j),
\]
and denote the solution by $\widehat\theta^{1}$.

\textbf{Step 6:} Repeat Steps 2--5 with $(\mathcal D_3,\mathcal D_1,\mathcal D_2)$ and $(\mathcal D_1,\mathcal D_2,\mathcal D_3)$; obtain $\widehat\theta^{2}$ and $\widehat\theta^{3}$.

\textbf{Step 7:} Return $\widehat\theta^{\mathrm{DiPPI}}=\sum_{k=1}^3\omega_k\widehat\theta^k$.
\vspace{0.2em}
\hrule
\end{minipage}
\end{figure}

In Step 3, $\widehat\mu(X,Q)$ may be estimated using any suitable regression method with input $(X,\Phi(Q))$, including linear, regularized, additive, tree-based, or neural-network models. The choice should reflect the labeled sample size and the complexity and dimension of the regression problem. Model selection and tuning are performed within the fold used to fit $\widehat\mu$.

\subsection{Theoretical guarantees and inference}
\label{sec:dippi-theory}

We establish asymptotic guarantees for the cross-fitted estimator $\widehat\theta^{\mathrm{DiPPI}}$ in \hyperlink{alg:dippi}{Algorithm~1}, with estimation
of the prediction functions and calibration matrices. We also establish consistency of the covariance estimator and asymptotic validity of the resulting Wald confidence intervals.

{
\begin{assumption}
\label{assump:nuisance}
For each $k=1,2,3$, the fold size satisfies
$n_k/n\to\pi_k\in(0,1)$, and
$
\E\!\left[
\sup_{\theta\in\mathcal B_\rho}
\|\nabla_\theta U_\theta(X,Y)\|^2
\right]<\infty.
$
\end{assumption}
}

\begin{theorem}
\label{thm:feasible-dippi}
{Suppose Assumptions~\ref{assump:basic}
and~\ref{assump:nuisance} hold and $n/N\to r\in(0,\infty)$. If, for some
square-integrable $\mu(X,Q)$ with
$\Cov\{\mu(X,Q)\}\succ0$,
$
\max_{1\leq k\leq3}
\E\!\left\{\|\widehat\mu^k(X,Q)-\mu(X,Q)\|^2\right\}=o_p(1),
$
then}
$\widehat\theta^{\mathrm{DiPPI}}\xrightarrow{p}\theta^\star$ and
$
    \sqrt n(\widehat\theta^{\mathrm{DiPPI}}-\theta^\star)
    \xrightarrow{d}
    N\left(0,\Sigma_\mu^{\mathrm{DiPPI}}\right),
$
where
\begin{equation}
\begin{aligned}
    \Sigma_\mu^{\mathrm{DiPPI}}
    =H_{\theta^\star}^{-1}\Big[&\Cov\{U_{\theta^\star}(X,Y)\}
    -\frac{1}{1+r}
    \Cov\{U_{\theta^\star}(X,Y),\mu(X,Q)\}
    \Cov\{\mu(X,Q)\}^{-1}\\
    &\times
    \Cov\{\mu(X,Q),U_{\theta^\star}(X,Y)\}
    \Big]H_{\theta^\star}^{-\top}.
\end{aligned}
\label{eq:dippi-covariance}
\end{equation}
\end{theorem}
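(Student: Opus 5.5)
The plan is to analyze each fold estimator $\widehat\theta^k$ separately, conditional on the nuisance quantities built from the other two folds. Then we combine the three folds through the weights $\omega_k$.

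\textbf{Reduction to a fixed-correction problem within a fold.} Fix $k$ and condition on $(\widehat\theta_0^k,\widehat\mu^k)$, which are independent of $\mathcal D_k$ and of $\mathcal D_U$. Write $\mu_0=\mu$ and define the population calibration matrix
\[
M^\star=\Cov\{U_{\theta^\star}(X,Y),\mu(X,Q)\}\Cov\{\mu(X,Q)\}^{-1}.
\]
The Step~5 equation has the form \eqref{eq:ee} restricted to $\mathcal D_k$, with a correction $\widehat h=(1+n/N)^{-1}\widehat M^k\widehat\mu^k$ that does not depend on $\theta$. Because the correction is $\theta$-free, its $\theta$-derivative vanishes, and the existence and uniqueness of a local root follow exactly as in Theorem~\ref{thm:fixed-correction}. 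That argument uses only Assumption~\ref{assump:basic} applied to $U_\theta$ on $\mathcal D_k$, together with the fact that the correction terms are $O_p(n^{-1/2})$. Linearizing gives
\[
\sqrt{n_k}(\widehat\theta^k-\theta^\star)=-H_{\theta^\star}^{-1}\sqrt{n_k}\Big[\tfrac1{n_k}\textstyle\sum_{i\in\mathcal D_k}\{U_{\theta^\star}(X_i,Y_i)-c\,\widehat M^k\widehat\mu^k_i\}+\tfrac{c}{N}\sum_j \widehat M^k\widehat\mu^k_j\Big]+o_p(1),
\]
where $c=(1+n/N)^{-1}$ and $\widehat\mu^k_i=\widehat\mu^k(X_i,Q_i)$.

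\textbf{Replacing estimated nuisances by limits.} This is the main step. First, $\widehat\mu^k$ can be replaced by $\mu$. The difference term $\frac1{n_k}\sum_{\mathcal D_k}(\widehat\mu^k-\mu)_i-\frac1N\sum_{\mathcal D_U}(\widehat\mu^k-\mu)_j$ has conditional mean zero, because both samples share the law of $(X,Q)$. Its conditional variance is $O(n^{-1}\E\|\widehat\mu^k-\mu\|^2)=o_p(n^{-1})$, so it is $o_p(n^{-1/2})$ by conditional Chebyshev. This is the standard cross-fitting argument, and it is also why the $\theta^\star$ versus $\widehat\theta_0$ mismatch in $\widehat\mu$ is harmless: only the $L_2$ limit $\mu$ matters.

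Second, $\widehat M^k$ must be shown consistent for $M^\star$. Here the within-fold reuse is delicate, since $\widehat M^k$ is computed on $\mathcal D_k$ itself. I would write $\widehat M^k-M^\star=o_p(1)$ and note that it multiplies $\frac1{n_k}\sum_{\mathcal D_k}\mu_i-\frac1N\sum_{\mathcal D_U}\mu_j=O_p(n^{-1/2})$, so the product is $o_p(n^{-1/2})$ with no independence needed. Consistency of $\widehat M^k$ requires two things. The first is that $\widehat{\Cov}(\widehat\mu^k)\to\Cov(\mu)\succ0$, which follows from the $L_2$ convergence and a law of large numbers conditional on the other folds. The second is that $\widehat{\Cov}\{U_{\widehat\theta_0^k},\widehat\mu^k\}\to\Cov\{U_{\theta^\star},\mu\}$. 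This uses $\widehat\theta_0^k\to\theta^\star$ together with the mean-value bound $\|U_{\widehat\theta_0}-U_{\theta^\star}\|\le\sup_{\mathcal B_\rho}\|\nabla U\|\,\|\widehat\theta_0-\theta^\star\|$, and the square-integrability in Assumption~\ref{assump:nuisance} controls the cross moments via Cauchy--Schwarz. I expect this $\widehat M$ step, with its uniform moment control, to be the main technical obstacle.

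\textbf{Variance and aggregation.} With the nuisances fixed at $(M^\star,\mu)$, the fold-$k$ influence term is
\[
\frac1{n_k}\sum_{\mathcal D_k}\{U-cM^\star\mu\}+\frac cN\sum_{\mathcal D_U}M^\star\mu,
\]
written at $\theta^\star$. Aggregating $\sum_k\omega_k\widehat\theta^k$ turns the labeled parts into $\frac1n\sum_{i\le n}\{U_i-cM^\star\mu_i\}$, because the weights $\omega_k=n_k/n$ exactly reconstitute the full labeled average. The unlabeled parts combine to $\frac cN\sum_j M^\star\mu_j$ (using $\sum_k\omega_k=1$, and noting all folds share the same limit $M^\star$). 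The two pieces are independent, so the CLT gives asymptotic covariance, after scaling by $n$,
\[
\Cov(U-cM^\star\mu)+c^2 r\,M^\star\Cov(\mu)M^{\star\top}.
\]
Writing $C=\Cov(U,\mu)\Cov(\mu)^{-1}\Cov(\mu,U)$ and using $c\to1/(1+r)$, this expands to
\[
\Cov(U)-2cC+c^2(1+r)C=\Cov(U)-\tfrac1{1+r}C.
\]
Sandwiching with $H_{\theta^\star}^{-1}$ gives \eqref{eq:dippi-covariance}. Consistency of $\widehat\theta^{\mathrm{DiPPI}}$ follows from the consistency of each $\widehat\theta^k$.
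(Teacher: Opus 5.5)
Your proposal is correct and follows essentially the same route as the paper. You run a foldwise contraction-and-linearization argument borrowed from Theorem~\ref{thm:fixed-correction}, using that the correction is $\theta$-free, and then do the nuisance replacement that the paper isolates as Lemma~\ref{lem:foldwise-replacement}: a conditional second-moment bound to swap $\widehat\mu^k$ for $\mu$, plus consistency of $\widehat M^k$ (via the pilot rate, the mean-value bound under Assumption~\ref{assump:nuisance}, and Cauchy--Schwarz) multiplied against an $O_p(n^{-1/2})$ term, with the same weighted aggregation and covariance algebra $\Cov(U)-2cC+c^2(1+r)C=\Cov(U)-C/(1+r)$.
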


\begin{corollary}
\label{cor:dippi-safety}
Under the conditions of Theorem~\ref{thm:feasible-dippi}, let $\Sigma^{\mathrm C}=H_{\theta^\star}^{-1}\Cov\{U_{\theta^\star}(X,Y)\}H_{\theta^\star}^{-\top}$ denote the labeled-only asymptotic covariance. Then
\begin{equation}
\begin{aligned}
    \Sigma^{\mathrm C}-\Sigma_\mu^{\mathrm{DiPPI}}
    ={}&\frac{1}{1+r}H_{\theta^\star}^{-1}
    \Cov\{U_{\theta^\star}(X,Y),\mu(X,Q)\}
    \Cov\{\mu(X,Q)\}^{-1}\\
    &\quad\times
    \Cov\{\mu(X,Q),U_{\theta^\star}(X,Y)\}
    H_{\theta^\star}^{-\top}
    \succeq0.
\end{aligned}
    \label{eq:dippi-safety-gap}
\end{equation}
Thus, DiPPI is asymptotically no less efficient than labeled-only inference, even when $\mu(X,Q)$ differs from $\mathbb E[U_{\theta^\star}(X,Y)\mid X,\Phi(Q)]$.
If, in addition, $\mu(X,Q)=E\{U_{\theta^\star}(X,Y)\mid X,\Phi(Q)\}$, then
\begin{equation}
\Sigma_\mu^{\mathrm{DiPPI}}
=
H_{\theta^\star}^{-1}
\Big[
\Cov\{U_{\theta^\star}(X,Y)\}
-
\frac{1}{1+r}
\Cov\!\left\{
\E[U_{\theta^\star}(X,Y)\mid X,\Phi(Q)]
\right\}
\Big]
H_{\theta^\star}^{-\top}.
\label{eq:dippi-oracle-covariance}
\end{equation}
\end{corollary}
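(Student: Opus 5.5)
The corollary follows from Theorem~\ref{thm:feasible-dippi} by direct algebra, so the plan is to subtract covariance expressions and then verify positive semidefiniteness. Write $U=U_{\theta^\star}(X,Y)$, $\mu=\mu(X,Q)$, $C=\Cov\{U,\mu\}$ and $V=\Cov\{\mu\}\succ0$.

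\textbf{Step 1: the gap formula.} Subtract \eqref{eq:dippi-covariance} from $\Sigma^{\mathrm C}$. The $\Cov\{U\}$ terms cancel, leaving
\[
\frac{1}{1+r}H_{\theta^\star}^{-1}CV^{-1}C^\top H_{\theta^\star}^{-\top},
\]
where $\Cov\{\mu,U\}=C^\top$. This is exactly \eqref{eq:dippi-safety-gap}.

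\textbf{Step 2: positive semidefiniteness.} Since $V\succ0$, the matrix $V^{-1/2}$ exists. Then
\[
CV^{-1}C^\top=(CV^{-1/2})(CV^{-1/2})^\top\succeq0.
\]
Congruence by the nonsingular $H_{\theta^\star}^{-1}$ preserves semidefiniteness, and $1/(1+r)>0$, so the gap is $\succeq0$. This argument never uses the relation between $\mu$ and the true conditional expectation. That is why the safety claim holds even when $\mu$ is misspecified.

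\textbf{Step 3: the oracle case.} Now let $\mu=\E\{U\mid X,\Phi(Q)\}$. By the tower property, $\E\{(U-\mu)(\mu-\E\mu)^\top\}=0$, because $\mu$ is measurable with respect to $(X,\Phi(Q))$. Hence $C=\Cov\{U,\mu\}=\Cov\{\mu\}=V$, and therefore $CV^{-1}C^\top=V=\Cov\{\E[U\mid X,\Phi(Q)]\}$. Substituting this into \eqref{eq:dippi-covariance} gives \eqref{eq:dippi-oracle-covariance}.

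\textbf{Main obstacle.} The only delicate point is the tower-property step. It needs $\mu$ to be square integrable, which is assumed, and it needs the conditioning to be on $(X,\Phi(Q))$ as stated, rather than on $(X,Q)$. The invertibility of $V$ is supplied by the hypothesis $\Cov\{\mu\}\succ0$ of Theorem~\ref{thm:feasible-dippi}, and I will check that this carries over to the oracle $\mu$ in Step 3.
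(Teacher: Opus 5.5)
Your proposal is correct and follows the paper's proof: you subtract the covariances, show $CV^{-1}C^\top\succeq0$ via the $V^{-1/2}$ factorization, and in the oracle case use iterated expectations to get $\Cov\{U,\mu\}=\Cov\{\mu\}$, which the paper phrases as $M_\mu=I_d$. The invertibility of $V$ you plan to check in Step 3 needs no separate argument, because it is simply the hypothesis $\Cov\{\mu\}\succ0$ of Theorem~\ref{thm:feasible-dippi}, which the corollary keeps in force.
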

Corollary~\ref{cor:dippi-safety} shows that
$\Sigma_\mu^{\mathrm{DiPPI}}\preceq\Sigma^{\mathrm C}$ even if
$\mu(X,Q)\neq\E\{U_{\theta^\star}(X,Y)\mid X,\Phi(Q)\}$.
When $\mu(X,Q)=\E\{U_{\theta^\star}(X,Y)\mid X,\Phi(Q)\}$, DiPPI attains the minimum asymptotic covariance
among prediction functions based on $(X,\Phi(Q))$ satisfying
Assumption~\ref{assump:fixed-correction}.

For variance estimation, let $\widehat{\Cov}_{\mathcal D_k}$ and $\widehat{\Cov}_{\mathrm U}$ denote empirical covariances over $\mathcal D_k$ and the unlabeled sample, respectively. Because $\omega_k/|\mathcal D_k|=1/n$, each labeled observation receives weight $1/n$ after the fold estimates are averaged. All folds use the same unlabeled observations, so their unlabeled corrections are averaged before computing the covariance. We estimate the asymptotic covariance by
\begin{equation}
\begin{aligned}
\widehat\Sigma^{\mathrm{DiPPI}}
=\widehat H^{-1}\Bigg[&
\sum_{k=1}^3\omega_k
\widehat{\Cov}_{\mathcal D_k}\!\left\{
U_{\widehat\theta^{\mathrm{DiPPI}}}(X_i,Y_i)
   -\frac{\widehat M^k\widehat\mu^k(X_i,Q_i)}{1+n/N}
\right\}\\
&+\frac{n}{N}\widehat{\Cov}_{\mathrm U}\!\left\{
\frac{1}{1+n/N}\sum_{k=1}^3\omega_k
\widehat M^k\widehat\mu^k(X_j,Q_j)
\right\}\Bigg]\widehat H^{-\top},
\end{aligned}
\label{eq:dippi-sandwich-estimator}
\end{equation}
For smooth scores, one may take $\widehat H=n^{-1}\sum_{i=1}^n\nabla_\theta U_{\widehat\theta^{\mathrm{DiPPI}}}(X_i,Y_i)$. More generally, we only require $\widehat H\xrightarrow{p}H_{\theta^\star}$. The following theorem gives covariance consistency.

\begin{theorem}
\label{thm:dippi-variance}
{Suppose the conditions of Theorem~\ref{thm:feasible-dippi}
hold and $\widehat H\xrightarrow{p}H_{\theta^\star}$. Then}
\begin{equation}
    \widehat\Sigma^{\mathrm{DiPPI}}
    \xrightarrow{p}
    \Sigma_\mu^{\mathrm{DiPPI}}.
    \label{eq:dippi-variance-consistency}
\end{equation}
For $a\in\R^d$ satisfying
$a^\top\Sigma_\mu^{\mathrm{DiPPI}}a>0$,
\begin{equation}
    \frac{\sqrt n\,a^\top
    (\widehat\theta^{\mathrm{DiPPI}}-\theta^\star)}
    {\{a^\top\widehat\Sigma^{\mathrm{DiPPI}}a\}^{1/2}}
    \xrightarrow{d}N(0,1),
    \label{eq:dippi-studentized-clt}
\end{equation}
and the Wald interval
$a^\top\widehat\theta^{\mathrm{DiPPI}}\pm z_{1-\alpha/2}(\frac{a^\top\widehat\Sigma^{\mathrm{DiPPI}}a}{n})^{1/2}$
has asymptotic coverage $1-\alpha$ for $a^\top\theta^\star$. %
\end{theorem}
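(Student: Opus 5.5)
The proof of Theorem~\ref{thm:dippi-variance} proceeds in three steps: show each piece of the bracketed middle matrix in \eqref{eq:dippi-sandwich-estimator} converges to the matching population quantity, combine with $\widehat H\xrightarrow{p}H_{\theta^\star}$ via the continuous mapping theorem, and then deduce \eqref{eq:dippi-studentized-clt} and the coverage statement from Theorem~\ref{thm:feasible-dippi} by Slutsky's lemma.

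\textbf{Step 1: limits of the calibrated predictors.} Set $M^\star=\Cov\{U_{\theta^\star}(X,Y),\mu(X,Q)\}\Cov\{\mu(X,Q)\}^{-1}$ and $g(X,Q)=M^\star\mu(X,Q)/(1+r)$. As in the proof of Theorem~\ref{thm:feasible-dippi}, $\widehat\theta_0^k\xrightarrow{p}\theta^\star$ and $\widehat M^k\xrightarrow{p}M^\star$; I take both as already established there. Since $\Cov\{\mu\}\succ0$, the empirical inverse is eventually well defined. Define $\Delta^k=\widehat M^k\widehat\mu^k/(1+n/N)-g$. Then
\[
\|\Delta^k\|\le \frac{\|\widehat M^k\|\,\|\widehat\mu^k-\mu\|}{1+n/N}+\Big\|\frac{\widehat M^k}{1+n/N}-\frac{M^\star}{1+r}\Big\|\,\|\mu\|.
\]
Condition on the fitted $(\widehat\theta_0^k,\widehat\mu^k)$, which is independent of $\mathcal D_k$ and $\mathcal D_U$. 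By Markov's inequality and the hypothesis $\E\|\widehat\mu^k-\mu\|^2=o_p(1)$, the empirical averages of $\|\Delta^k\|^2$ over $\mathcal D_k$ and over $\mathcal D_U$ are $o_p(1)$. The one wrinkle is that $\widehat M^k$ is computed on $\mathcal D_k$ itself; this is harmless because it enters only as a matrix converging in probability.

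\textbf{Step 2: the labeled term.} Write $U_{\widehat\theta}(X_i,Y_i)=U_{\theta^\star}(X_i,Y_i)+R_i$ with $\widehat\theta=\widehat\theta^{\mathrm{DiPPI}}$. By the mean value theorem, $\|R_i\|\le \sup_{\mathcal B_\rho}\|\nabla_\theta U_\theta(X_i,Y_i)\|\,\|\widehat\theta-\theta^\star\|$. Assumption~\ref{assump:nuisance} makes the averaged square of this envelope $O_p(1)$, and $\widehat\theta\xrightarrow{p}\theta^\star$, so $n_k^{-1}\sum_{\mathcal D_k}\|R_i\|^2=o_p(1)$. Empirical covariance is a bilinear form, so Cauchy--Schwarz bounds all cross terms by the square roots of these $o_p(1)$ averages times $O_p(1)$ quantities. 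By the LLN, $\widehat{\Cov}_{\mathcal D_k}\{U-\ldots\}\xrightarrow{p}\Cov\{U_{\theta^\star}-g\}$. Since $\sum_k\omega_k\to1$, the labeled sum converges to $\Cov\{U_{\theta^\star}-g\}$.

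\textbf{Step 3: the unlabeled term and assembly.} On $\mathcal D_U$, $\sum_k\omega_k\widehat M^k\widehat\mu^k/(1+n/N)=g+\sum_k\omega_k\Delta^k$. The same argument gives $\widehat{\Cov}_{\mathrm U}\xrightarrow{p}\Cov\{g\}$, and $n/N\to r$. The middle matrix therefore tends to
\[
\Cov\{U_{\theta^\star}\}-2\Cov\{U_{\theta^\star},g\}+(1+r)\Cov\{g\}.
\]
Substituting $g=M^\star\mu/(1+r)$, and using $M^\star\Cov\{\mu\}M^{\star\top}=\Cov\{U_{\theta^\star},\mu\}\Cov\{\mu\}^{-1}\Cov\{\mu,U_{\theta^\star}\}=:C$ together with $\Cov\{U_{\theta^\star},g\}=C/(1+r)$, this equals $\Cov\{U_{\theta^\star}\}-C/(1+r)$, which matches \eqref{eq:dippi-covariance}.

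With $\widehat H^{-1}\xrightarrow{p}H_{\theta^\star}^{-1}$, this gives \eqref{eq:dippi-variance-consistency}. For $a$ with $a^\top\Sigma_\mu^{\mathrm{DiPPI}}a>0$, Theorem~\ref{thm:feasible-dippi} gives $\sqrt n\,a^\top(\widehat\theta-\theta^\star)\xrightarrow{d}N(0,a^\top\Sigma_\mu^{\mathrm{DiPPI}} a)$. Dividing by the consistent $\{a^\top\widehat\Sigma a\}^{1/2}$ and applying Slutsky's lemma yields \eqref{eq:dippi-studentized-clt}, and the Wald coverage follows.

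The \textbf{main obstacle} is Step 1. The hypothesis controls $\widehat\mu^k-\mu$ only in $L^2$ conditional on the training folds, but the calibration matrix $\widehat M^k$ and the covariance estimate reuse $\mathcal D_k$. The conditioning argument must therefore be set up so that the random $\widehat M^k$ factors out as an $O_p(1)$ matrix before the LLN is applied to $\|\widehat\mu^k-\mu\|^2$.
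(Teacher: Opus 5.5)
Your proposal is correct and follows essentially the same route as the paper's proof. The shared steps are:
\begin{itemize}
\item the same two-piece split of $\widehat M^k\widehat\mu^k/(1+n/N)-M_\mu\mu/(1+r)$, with $\|\widehat M^k\|=O_p(1)$ factored out before conditional Markov is applied to $\|\widehat\mu^k-\mu\|^2$ on $\mathcal D_k$ and $\mathcal D_U$;
\item the envelope bound from Assumption~\ref{assump:nuisance} for $U_{\widehat\theta^{\mathrm{DiPPI}}}-U_{\theta^\star}$;
\item Cauchy--Schwarz plus the law of large numbers for the empirical covariances, and the identical middle-matrix algebra;
\item Slutsky's theorem for the studentized limit and the Wald coverage.
\end{itemize}
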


\section{Simulation Studies}
\label{sec:sim}

We consider two simulation settings in which predictive variance provides information beyond the predictive mean. In the first setting, observations with the same covariates and predictive mean can have different expected responses, and the predictive variance provides information about these differences.
In the second setting, predictive variance indicates which point predictions are more reliable, while the overall prediction bias and mean squared error remain fixed. 
Additional simulations are reported in Appendix~\ref{app:additional-simulation}.

We compare labeled-only estimation ({Classical}), PPI++, RePPI, and DiPPI. We specify $Q$ directly, with PPI++ and RePPI using its mean $\widehat Y=\E_Q(Y)$ and DiPPI using the distributional features $\Phi(Q)$ defined below:
\[
\Phi(Q)=
\bigl(
\operatorname{mean}(Q),
\operatorname{sd}(Q),
q_{0.1}(Q),
q_{0.25}(Q),
q_{0.5}(Q),
q_{0.75}(Q),
q_{0.9}(Q)
\bigr).
\]
We estimate the conditional mean of $Y$ using linear regression in Experiment 1 and histogram gradient boosting in Experiment 2. Both experiments use $N=10{,}000$ and $200$ replications.
We report relative RMSE to Classical, mean $95\%$ confidence-interval width, and
empirical coverage. For method $m$,
$
\operatorname{relative\ RMSE}_m
=
\left\{
{\sum_{b=1}^R\|\widehat\theta_m^{(b)}-\theta\|_2^2}/
{\sum_{b=1}^R\|\widehat\theta_{\mathrm C}^{(b)}-\theta\|_2^2}
\right\}^{1/2},
$
where $\widehat\theta_{\mathrm C}^{(b)}$ denotes the Classical estimator.
Classical therefore has relative RMSE one, with smaller values indicating greater
accuracy. We report confidence-interval width and coverage averaged across the slope coefficients.

\textbf{Experiment 1: Outcome-relevant predictive variance.}
We examine whether predictive variance can improve inference beyond the predictive mean. Let $X,Z\in\mathbb R^5$ have independent standard-normal
coordinates, let $V\sim\operatorname{Unif}(1,4)$, and let
$\varepsilon\sim N(0,1)$, independently. We generate
\[
Y
=
X^\top\theta
+
Z^\top\beta
+
\kappa(V-2.5)
+
\varepsilon,
\qquad
Q=N(Z^\top\beta,V),
\qquad
\widehat Y=Z^\top\beta.
\]
Here, $V$ affects the conditional mean of $Y$ but appears only in the variance of $Q$. Thus, $Q$ provides information about $Y$ beyond $(X,\widehat Y)$ when $\kappa\neq0$. We estimate the slope coefficients $\theta$ with $n=1,000$ and $N=10,000$, varying $\kappa\in\{0,0.5,1,1.5,2,2.5,3\}$. Figure~\ref{fig:sim-spread} shows similar performance for DiPPI and RePPI when $\kappa=0$. As $\kappa$ increases, DiPPI achieves larger efficiency gains over RePPI, with coverage remaining close to nominal.

\begin{figure}[!htbp]
    \centering
    \begin{subfigure}{0.98\linewidth}
        \caption{Experiment 1: Outcome-relevant predictive variance}\label{fig:sim-spread}
        \includegraphics[width=\linewidth]{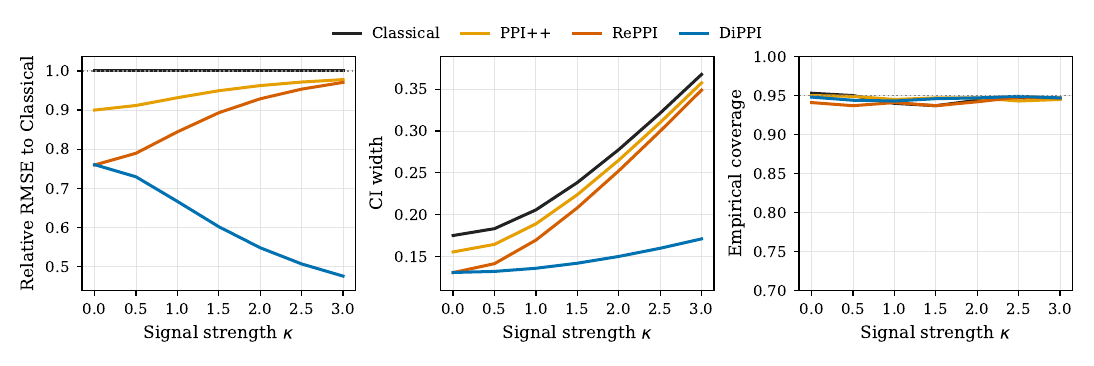}
    \end{subfigure}
    \par\medskip
    \begin{subfigure}{0.98\linewidth}
        \caption{Experiment 2: Heterogeneous prediction-error variances}\label{fig:sim-reliability}
        \includegraphics[width=\linewidth]{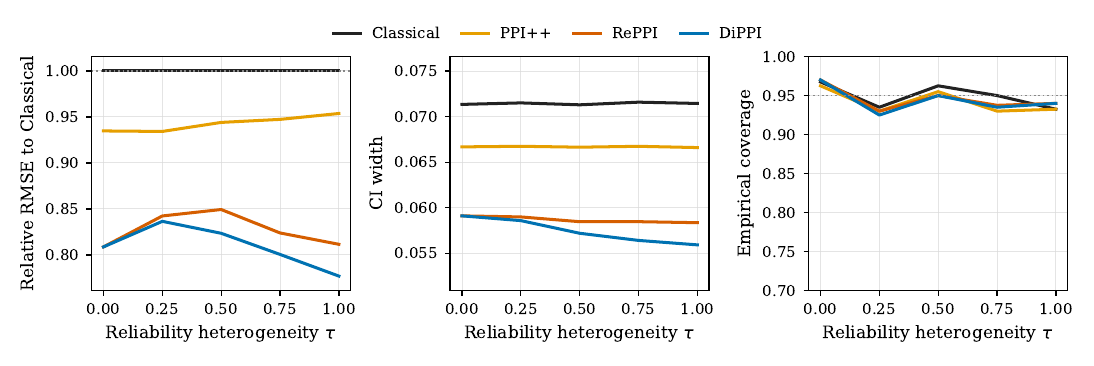}
    \end{subfigure}
    \par\medskip
    \caption{Relative RMSE to Classical, mean $95\%$ CI width, and empirical coverage.}
    \label{fig:simulations}
\end{figure}

\textbf{Experiment 2: Heterogeneous prediction-error variances.}

We next consider point predictions whose accuracy differs across observations. Generate
\[
X_i\sim N(0,I_2),
\qquad
Y_i=X_i^\top\theta^\star+\varepsilon_i,
\qquad
\theta^\star=(1,-1)^\top/\sqrt{2},
\qquad
\varepsilon_i\sim N(0,1).
\]
Independently, draw $A_i$ uniformly from $\{-1,1\}$ and set
\[
S_i
=
\frac{\exp(\tau A_i)}{\sqrt{\cosh(2\tau)}},
\qquad
\xi_i\mid S_i\sim N(-2,S_i^2),
\qquad
\widehat Y_i=Y_i+\xi_i,
\qquad
Q_i=N(\widehat Y_i,S_i^2).
\]
The parameter $\tau\in\{0,0.25,0.5,0.75,1\}$ controls the variation in prediction-error variance across observations. The normalization gives
$
\E(S_i^2)=1$ and $
\E\{(\widehat Y_i-Y_i)^2\}=5,
$
so the prediction bias and mean squared error remain fixed as $\tau$ varies.  
At $\tau=0$, $S_i=1$, and $Q_i$ provides no information beyond $\widehat Y_i$. For $\tau>0$, its variance indicates how accurately $\widehat Y_i$ measures $Y_i$. This information helps estimate the conditional mean of $Y_i$ given $X_i$ and $\widehat Y_i$, since predictions with smaller error variances receive greater weight. We estimate the regression slopes $\theta^\star$ with $n=3{,}000$ and $N=10{,}000$. Figure~\ref{fig:sim-reliability} shows similar performance for DiPPI and RePPI at $\tau=0$, with increasing efficiency gains for DiPPI as $\tau$ increases.

\section{Real-Data Experiments}
\label{sec:realdata}

We evaluate DiPPI using Wine, Forest, and Politeness data. Wine and Politeness examine whether predictive distributions improve inference beyond their means, while Forest examines whether class probabilities improve inference beyond point predictions in multinomial regression. 

We compare {Classical}, {PPI++}, {RePPI}, and {DiPPI} using point predictions derived from the same predictive distribution $Q$ used by DiPPI. DiPPI and RePPI use identical data folds and histogram gradient boosting to estimate conditional scores. For each labeled ratio $r$, we independently draw $n=rN$ labeled and $N$ unlabeled observations with replacement from the analysis population and repeat the experiment $200$ times. {Using the full-population estimate as the ground-truth target, we report relative RMSE, mean $95\%$ CI width, and empirical coverage, as in Section~\ref{sec:sim}}.

\subsection{Wine ratings}
\label{sec:real-wine}
The Wine data
\footnote{\url{https://www.kaggle.com/datasets/mysarahmadbhat/wine-tasting}}
contain $10,000$ U.S. reviews with ratings from 80 to 99. We regress rating on log price and indicators for
California, Washington, and Oregon, and target the log-price coefficient. We set $r\in\{0.05,0.10,0.15,0.20,0.25\}$.
{We use GPT-5.4 nano to construct a predictive distribution over all 20 possible ratings from the written review.} For each of PPI++ and RePPI, we evaluate two variants for point predictions, one using the predictive mean $\widehat Y=\E_Q(Y)$
and the other using the modal rating
$\widehat Y=\arg\max_{y\in\{80,\ldots,99\}}Q(y)$, whereas DiPPI uses the full
probability vector. This setting tests whether features of the rating
distribution beyond its mean or mode, such as dispersion or asymmetry, contain additional
information about the OLS score. Figure~\ref{fig:wine} shows that DiPPI has the lowest relative RMSE and
shortest intervals across all labeled ratios, with coverage near the nominal
level.
\begin{figure}[!htbp]
    \centering
    \begin{subfigure}{0.98\linewidth}
        \caption{Wine Data}\label{fig:wine}
        \includegraphics[width=\linewidth]{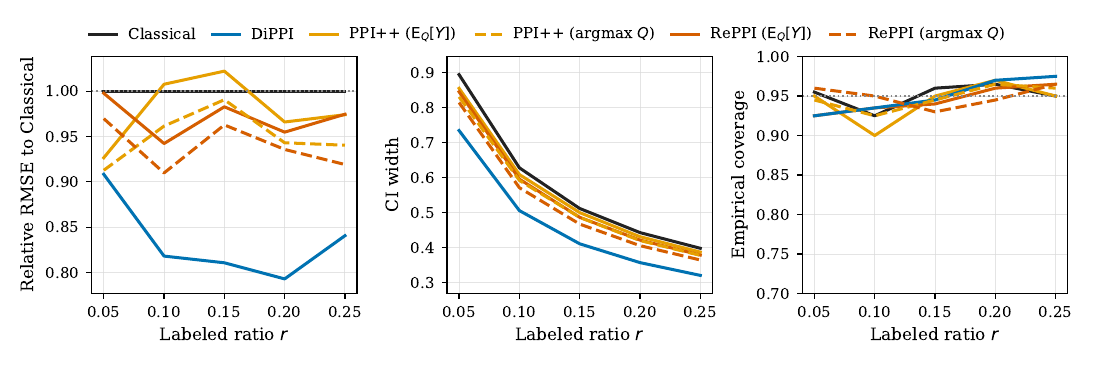}
    \end{subfigure}
    \par\medskip
    \begin{subfigure}{0.98\linewidth}
        \caption{Forest Data}\label{fig:forest}
        \includegraphics[width=\linewidth]{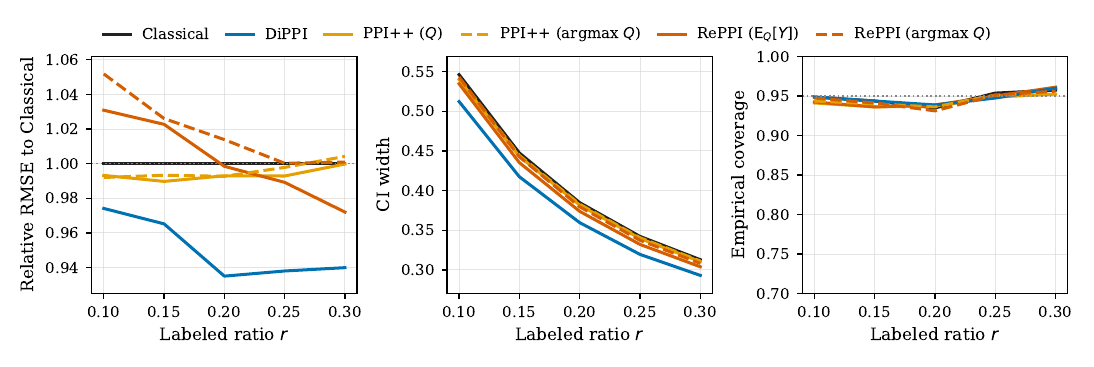}
    \end{subfigure}
    \par\medskip
    \begin{subfigure}{0.98\linewidth}
        \caption{Politeness Data}\label{fig:politeness}
        \includegraphics[width=\linewidth]{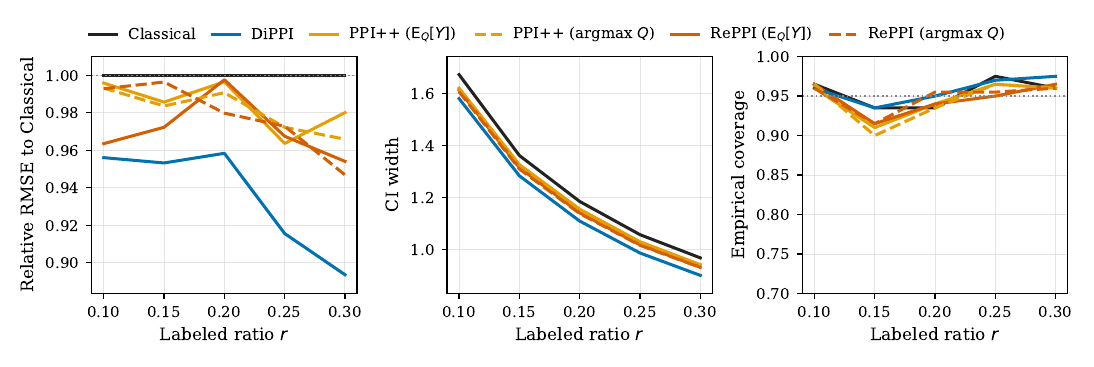}
    \end{subfigure}
    \par\medskip
    \caption{Relative RMSE to Classical, mean $95\%$ CI width, and empirical coverage.}
    \label{fig:realdata}
\end{figure}

\subsection{Forest cover types}
\label{sec:real-forest}
The Forest CoverType data\footnote{\url{https://archive.ics.uci.edu/dataset/31/covertype}} contain forest cover labels and cartographic covariates
from four wilderness areas in northern Colorado~\citep{covertype}. We retain
spruce/fir, lodgepole pine, ponderosa pine, and krummholz as separate classes,
and combine cottonwood/willow, aspen, and Douglas-fir into a fifth class, Rare.
The inferential target is the four class-specific coefficients of standardized
slope in a five-class multinomial-logit regression with an intercept and
standardized noon hillshade, using Rare as the reference class.

We train an XGBoost classifier~\citep{xgboost} on 15,000 observations sampled
from Rawah, Neota, and Cache la Poudre, using all 54 covariates. The analysis
population consists of 40,000 observations sampled from Comanche Peak, which
contains all five classes. The fitted classifier is held fixed and returns a
class-probability vector $Q$ for each observation. We set $r\in\{0.10,0.15,0.20,0.25,0.30\}$.
PPI++ uses $Q$ directly in the averaged multinomial loss.
RePPI uses 
$\widehat Y=\E_Q(Y)$, whereas DiPPI uses the full vector $Q$. We also evaluate PPI++ and RePPI using the modal class
$\widehat Y=\arg\max_{k\in\{1,\ldots,5\}}Q_k$
as the point prediction.
Figure~\ref{fig:forest} shows that DiPPI has the lowest relative RMSE and
shortest intervals across all labeled ratios, with coverage near the nominal
level.

\subsection{Politeness requests}
\label{sec:real-politeness}

The Politeness data contain Stack Exchange and Wikipedia requests with human
ratings on a 1--25 politeness scale~\citep{danescu2013politeness}. We regress the politeness rating on an indicator for a
hedging expression and target its coefficient.

We use GPT-5.4 nano to construct a predictive distribution over the 25 possible ratings. For each of PPI++ and RePPI, we also evaluate two point predictions,
one using the predictive mean $\widehat Y=\E_Q(Y)$
and the other using the modal rating
$\widehat Y=\arg\max_{y\in\{1,\ldots,25\}}Q(y)$, whereas DiPPI uses the full probability vector. This
setting therefore examines whether features of the rating distribution beyond
its mean or mode contain additional information about the OLS score. The analysis population contains $4386$ requests. We set $r\in\{0.10,0.15,0.20,0.25,0.30\}$. Figure~\ref{fig:politeness} shows that DiPPI gives the shortest intervals
throughout and the lowest or nearly lowest relative RMSE, while maintaining
near-nominal coverage.
\section{Discussion}
\label{sec:discussion}
We introduce distribution-informed prediction-powered inference (DiPPI), which uses predictive distributions as auxiliary information for inference based on general estimating equations. We characterize when a predictive distribution provides information about the estimating score beyond a point prediction and quantify the resulting oracle efficiency gain.
We establish efficiency results, asymptotic guarantees for the
cross-fitted estimator, and valid inference under matrix calibration. Simulations
and real-data experiments show that DiPPI improves efficiency over PPI methods
based on point predictions when the predictive distribution contains additional
score-relevant information. %

\clearpage
\appendix
\begin{center}
{\LARGE\bfseries Appendix}
\end{center}
\section{Additional Simulation Studies}
\label{app:additional-simulation}

\subsection{Predictive variance across labeled ratios}
\label{sec:sim-spread-ratio}

We further examine the predictive-variance setting in
Section~\ref{sec:sim} as the amount of labeled data varies. We fix
$\kappa=1.5$, set $N=10{,}000$, and vary
$
r=n/N\in\{0.05,0.10,0.15,0.20,0.25,0.30\}.$
Each configuration uses $200$ replications.

\begin{figure}[ht!]
	\centering
	\includegraphics[width=0.98\linewidth]{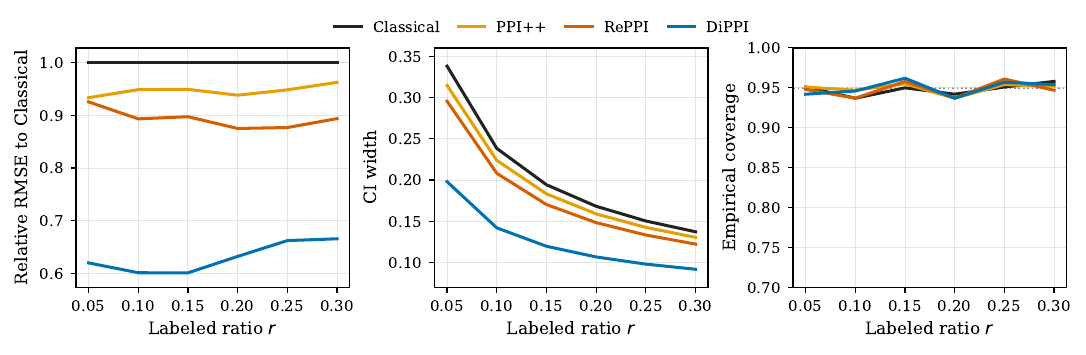}
	\caption{Outcome-relevant predictive variance as the labeled ratio varies.
		Panels show relative RMSE to Classical, mean $95\%$ CI width, and empirical
		coverage over $200$ replications.}
	\label{fig:sim-spread-ratio}
\end{figure}

Figure~\ref{fig:sim-spread-ratio} shows that DiPPI has the lowest relative
RMSE and shortest intervals across the labeled ratios, while maintaining
near-nominal coverage.

\subsection{Predictive skewness}
\label{sec:sim-skewness}

We also consider information carried by predictive skewness rather than
predictive variance. We use the methods, representation, and evaluation metrics
from Section~\ref{sec:sim}. Let $X,Z\in\mathbb R^5$ have independent
standard-normal coordinates, let $S$ be uniform on $\{-1,1\}$,
$G\sim\operatorname{Gamma}(4,1)$, and $\varepsilon\sim N(0,1)$, all
independent. Define
$W_{+}=(G-4)/2,$ and $
W_{-}=-W_{+}.
$
Given $(Z,S)$, the prediction system returns the distribution $Q$ of
$Z^\top\beta+W_S$, and we generate
\[
Y=X^\top\theta+Z^\top\beta+\kappa S+\varepsilon.
\]
Every $Q$ has the same mean $Z^\top\beta$ and variance one, but its skewness
reveals $S$. PPI++ and RePPI therefore use the same point prediction
$\widehat Y=Z^\top\beta$ regardless of $S$, whereas the asymmetric quantiles
in $\Phi(Q)$ retain the tail direction. We estimate the five OLS slopes in
$\theta$. We use $N=10{,}000$ and $200$ replications. The top row of
Figure~\ref{fig:sim-skewness} varies
$r\in\{0.05,0.10,0.15,0.20,0.25,0.30\}$ at $\kappa=1$, and the bottom row
varies $\kappa\in\{0,0.25,0.5,0.75,1\}$ at $n=1{,}000$.

\begin{figure}[ht!]
	\centering
	\includegraphics[width=0.98\textwidth]{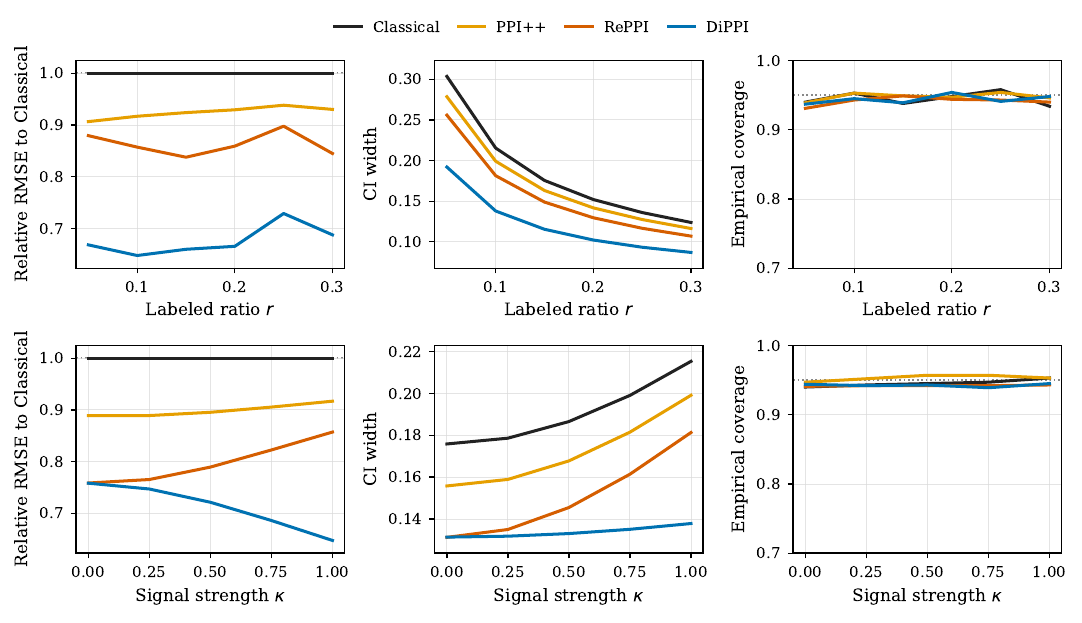}
	\caption{Outcome-relevant predictive skewness. The top row varies the labeled
		ratio at $\kappa=1$, and the bottom row varies $\kappa$ at $n=1{,}000$.
		Columns show relative RMSE to Classical, mean $95\%$ CI width, and empirical
		coverage over $200$ replications.}
	\label{fig:sim-skewness}
\end{figure}

Figure~\ref{fig:sim-skewness} shows that DiPPI and RePPI perform similarly
when $\kappa=0$. As the skewness becomes more informative about the outcome,
DiPPI achieves lower relative RMSE and shorter intervals while maintaining
near-nominal coverage.

\section{Regularity Conditions and Proofs}
\label{app:theory}

This section states sufficient conditions and proves the results in Section~\ref{sec:framework}. For each cross-fitted estimator, we condition on the data used to compute the initial estimator and estimate the conditional score. The remaining labeled fold and the unlabeled sample are then independent of the resulting estimates. Lemma~\ref{lem:foldwise-replacement} bounds the difference between their correction averages.

We use $\|\cdot\|$ for the Euclidean norm and its induced operator norm.

\subsection{Proof of Theorem~\ref{thm:fixed-correction}}

\begin{proof}
{
{We first establish existence and uniqueness of a local root,
together with its $n^{-1/2}$ rate.} Define
$$
{\Psi_{n,N}(\theta;h):={}}
\frac{1}{n}\sum_{i=1}^{n}\{U_\theta(X_i,Y_i)-h_\theta(X_i,Q_i)\}+
\frac{1}{n+N}\sum_{k=1}^{n+N}h_\theta(X_k,Q_k).
$$
\[
T_{n,N}(\theta)=\theta-H_{\theta^\star}^{-1}
\Psi_{n,N}(\theta;h).
\]
Since $H_{\theta^\star}$ is nonsingular,
\[
\Psi_{n,N}(\theta;h)=0
\quad\Longleftrightarrow\quad
T_{n,N}(\theta)=\theta.
\]
Thus, proving that the estimating equation has a unique root in
$\mathcal B_a$ is equivalent to proving that $T_{n,N}$ has a unique fixed
point in $\mathcal B_a$. Since $\mathcal B_a$ is a closed subset of
$\mathbb R^d$, it is complete. By the Banach fixed-point theorem,
$T_{n,N}$ has a unique fixed point in $\mathcal B_a$ whenever, for some
$\kappa\in(0,1)$,
\[
\begin{aligned}
T_{n,N}(\mathcal B_a)&\subseteq\mathcal B_a, \text{ and }
\|T_{n,N}(\theta_1)-T_{n,N}(\theta_2)\|
&\leq\kappa\|\theta_1-\theta_2\|,
\qquad \forall\,\theta_1,\theta_2\in\mathcal B_a.
\end{aligned}
\]
{Because $\mathcal B_a$ is convex,
$\theta_2+t(\theta_1-\theta_2)\in\mathcal B_a$ for every
$\theta_1,\theta_2\in\mathcal B_a$ and $t\in[0,1]$. The fundamental theorem
of calculus along this line segment yields}
\begin{align*}
T_{n,N}(\theta_1)-T_{n,N}(\theta_2)
={}&\left[\int_0^1\nabla_\theta T_{n,N}
\{\theta_2+t(\theta_1-\theta_2)\}\,dt\right]
(\theta_1-\theta_2).
\end{align*}
{
{
Hence,
\begin{equation}
\sup_{\theta\in\mathcal B_a}
\|\nabla_\theta T_{n,N}(\theta)\|\leq\kappa
\quad\Longrightarrow\quad
\|T_{n,N}(\theta_1)-T_{n,N}(\theta_2)\|
\leq\kappa\|\theta_1-\theta_2\|.
\label{eq:fixed-contraction-criterion}
\end{equation}
}
We now verify the derivative bound. Since
\[
\nabla_\theta T_{n,N}(\theta)
=I_d-H_{\theta^\star}^{-1}\nabla_\theta\Psi_{n,N}(\theta;h),
\]
it is enough to control $\nabla_\theta\Psi_{n,N}(\theta;h)$ uniformly. We
proceed in two steps. We first prove the uniform law
\begin{equation}
\sup_{\theta\in\mathcal B_\rho}
\left\|
\nabla_\theta\Psi_{n,N}(\theta;h)
-\E\{\nabla_\theta U_\theta(X,Y)\}
\right\|=o_p(1).
\label{eq:fixed-jacobian-ulln}
\end{equation}
We then combine \eqref{eq:fixed-jacobian-ulln} with the continuity of
$\theta\mapsto\E\{\nabla_\theta U_\theta(X,Y)\}$ at $\theta^\star$ and the
nonsingularity of $H_{\theta^\star}$ to choose $a\in(0,\rho)$ such that
\[
\Pr\!\left\{
\sup_{\theta\in\mathcal B_a}
\|\nabla_\theta T_{n,N}(\theta)\|\leq\kappa
\right\}\longrightarrow1.
\]
To prove \eqref{eq:fixed-jacobian-ulln}, write $c_n=N/(n+N)=1/(1+n/N)$. Differentiating the sample
estimating equation gives
{
\begin{align*}
\nabla_\theta\Psi_{n,N}(\theta;h)
={}&\frac1n\sum_{i=1}^n\nabla_\theta U_\theta(X_i,Y_i)
 -c_n\frac1n\sum_{i=1}^n\nabla_\theta h_\theta(X_i,Q_i)
 +c_n\frac1N\sum_{j=n+1}^{n+N}\nabla_\theta h_\theta(X_j,Q_j).
\end{align*}
}
Since the labeled and unlabeled samples have the same $(X,Q)$ marginal
distribution, the two $h$ terms have the same population expectation. Define
\begin{align*}
A_{1n}(\theta)
&=\frac1n\sum_{i=1}^n\nabla_\theta U_\theta(X_i,Y_i)
-\E\{\nabla_\theta U_\theta(X,Y)\},\\
A_{2n}(\theta)
&=\frac1n\sum_{i=1}^n\nabla_\theta h_\theta(X_i,Q_i)
-\E\{\nabla_\theta h_\theta(X,Q)\},\\
A_{3N}(\theta)
&=\frac1N\sum_{j=n+1}^{n+N}\nabla_\theta h_\theta(X_j,Q_j)
-\E\{\nabla_\theta h_\theta(X,Q)\}.
\end{align*}
{
Then we have
\[
\nabla_\theta\Psi_{n,N}(\theta;h)
-\E\{\nabla_\theta U_\theta(X,Y)\}
=A_{1n}(\theta)-c_nA_{2n}(\theta)+c_nA_{3N}(\theta),
\]
and, since $0<c_n\leq1$, the triangle inequality gives
}
\begin{align}
&\sup_{\theta\in\mathcal B_\rho}
\left\|\nabla_\theta\Psi_{n,N}(\theta;h)
-\E\{\nabla_\theta U_\theta(X,Y)\}\right\|
\leq
\sup_{\theta\in\mathcal B_\rho}\|A_{1n}(\theta)\|
+\sup_{\theta\in\mathcal B_\rho}\|A_{2n}(\theta)\|
+\sup_{\theta\in\mathcal B_\rho}\|A_{3N}(\theta)\|.
\label{eq:fixed-jacobian-decomposition}
\end{align}
We first prove that
$\sup_{\theta\in\mathcal B_\rho}\|A_{1n}(\theta)\|=o_p(1)$.
For any $\delta>0$, let
$\widetilde\theta_1,\ldots,\widetilde\theta_{K_\delta}$ be a finite
$\delta$-net of
$\mathcal B_\rho$. The triangle inequality gives
{
\begin{align}
\sup_{\theta\in\mathcal B_\rho}\|A_{1n}(\theta)\|&\leq
\max_{1\leq k\leq K_\delta}
\|A_{1n}(\widetilde\theta_k)\|+\frac1n\sum_{i=1}^n
\sup_{\substack{\theta_1,\theta_2\in\mathcal B_\rho\\
\|\theta_1-\theta_2\|\leq\delta}}
\|\nabla_{\theta_1}U_{\theta_1}(X_i,Y_i)
-\nabla_{\theta_2}U_{\theta_2}(X_i,Y_i)\|\notag\\
&\qquad+\E\!\left[
\sup_{\substack{\theta_1,\theta_2\in\mathcal B_\rho\\
\|\theta_1-\theta_2\|\leq\delta}}
\|\nabla_{\theta_1}U_{\theta_1}(X,Y)
-\nabla_{\theta_2}U_{\theta_2}(X,Y)\|
\right].
\label{eq:fixed-finite-net-bound}
\end{align}
}
For fixed $\delta$ and each $k=1,\ldots,K_\delta$,
Assumption~\ref{assump:basic} gives
\[
\E\!\left[
\|\nabla_{\widetilde\theta_k}
U_{\widetilde\theta_k}(X,Y)\|\right]
\leq
\E\!\left[
\sup_{\theta\in\mathcal B_\rho}
\|\nabla_\theta U_\theta(X,Y)\|\right]<\infty.
\]
Hence, the weak law of large numbers, applied componentwise, gives
\[
\|A_{1n}(\widetilde\theta_k)\|
=\left\|
\frac1n\sum_{i=1}^n
\nabla_{\widetilde\theta_k}U_{\widetilde\theta_k}(X_i,Y_i)
-\E\{\nabla_{\widetilde\theta_k}
U_{\widetilde\theta_k}(X,Y)\}
\right\|\xrightarrow{p}0.
\]
Since $K_\delta<\infty$, the union bound gives, for every $\varepsilon>0$,
\begin{align*}
&\Pr\!\left\{
\max_{1\leq k\leq K_\delta}
\|A_{1n}(\widetilde\theta_k)\|>\varepsilon\right\}\leq
\sum_{k=1}^{K_\delta}
\Pr\!\left\{
\|A_{1n}(\widetilde\theta_k)\|>\varepsilon\right\}
\longrightarrow0.
\end{align*}
Therefore, the maximum over the finite net is $o_p(1)$. For $\delta>0$,
define
\[
W_\delta(x,y)=
\sup_{\substack{\theta_1,\theta_2\in\mathcal B_\rho\\
\|\theta_1-\theta_2\|\leq\delta}}
\|\nabla_{\theta_1}U_{\theta_1}(x,y)
-\nabla_{\theta_2}U_{\theta_2}(x,y)\|.
\]
In~\eqref{eq:fixed-finite-net-bound}, the last two terms can be written as $
\frac1n\sum_{i=1}^nW_\delta(X_i,Y_i)
\text{ and }
\E\{W_\delta(X,Y)\}.
$
Since
\[
\E\{W_\delta(X,Y)\}
\leq2\E\!\left[
\sup_{\theta\in\mathcal B_\rho}
\|\nabla_\theta U_\theta(X,Y)\|
\right]<\infty,
\]
the weak law of large numbers gives, for every fixed $\delta>0$,
\[
\frac1n\sum_{i=1}^nW_\delta(X_i,Y_i)
-\E\{W_\delta(X,Y)\}\xrightarrow{p}0.
\]
Equivalently,
\[
\frac1n\sum_{i=1}^nW_\delta(X_i,Y_i)
=\E\{W_\delta(X,Y)\}+o_p(1).
\]
{
Substituting this result into \eqref{eq:fixed-finite-net-bound} gives
\begin{equation}
\sup_{\theta\in\mathcal B_\rho}\|A_{1n}(\theta)\|
\leq2\E\{W_\delta(X,Y)\}+o_p(1).
\label{eq:fixed-A1-bound}
\end{equation}
}
{
Almost-sure continuity on the compact set $\mathcal B_\rho$ implies
$W_\delta(X,Y)\to0$ almost surely as $\delta\to0$. Moreover,
\[
0\leq W_\delta(X,Y)
\leq2\sup_{\theta\in\mathcal B_\rho}
\|\nabla_\theta U_\theta(X,Y)\|,
\]
and the dominating random variable is integrable by
Assumption~\ref{assump:basic}. The dominated convergence theorem therefore
implies
\begin{equation}
\lim_{\delta\to0}\E\{W_\delta(X,Y)\}=0.
\label{eq:fixed-U-equicontinuity}
\end{equation}
}
{
For any $\varepsilon>0$, \eqref{eq:fixed-U-equicontinuity} permits a fixed
$\delta_\varepsilon>0$ such that
$2\E\{W_{\delta_\varepsilon}(X,Y)\}<\varepsilon/2$. By
\eqref{eq:fixed-A1-bound},
\[
\sup_{\theta\in\mathcal B_\rho}\|A_{1n}(\theta)\|
\leq \frac{\varepsilon}{2}+o_p(1),
\]
and hence
\[
\Pr\!\left\{
\sup_{\theta\in\mathcal B_\rho}\|A_{1n}(\theta)\|>\varepsilon
\right\}
\longrightarrow0.
\]
Thus,
\begin{equation}
\sup_{\theta\in\mathcal B_\rho}\|A_{1n}(\theta)\|=o_p(1).
\label{eq:fixed-U-ulln}
\end{equation}
}
Since $n/N\to r\in(0,\infty)$, both $n$ and $N$ diverge.
{By the same finite-net argument and
Assumption~\ref{assump:fixed-correction}, with
$\nabla_\theta U_\theta(X,Y)$ replaced by
$\nabla_\theta h_\theta(X,Q)$, we obtain}
{
\begin{equation}
\sup_{\theta\in\mathcal B_\rho}\|A_{2n}(\theta)\|
+\sup_{\theta\in\mathcal B_\rho}\|A_{3N}(\theta)\|=o_p(1).
\label{eq:fixed-h-ulln}
\end{equation}
Substituting \eqref{eq:fixed-U-ulln} and \eqref{eq:fixed-h-ulln} into
\eqref{eq:fixed-jacobian-decomposition} proves
\eqref{eq:fixed-jacobian-ulln}.
}

We next prove that
$\theta\mapsto\E\{\nabla_\theta U_\theta(X,Y)\}$ is continuous at
$\theta^\star$. To this end, let
$\{\theta_m\}_{m\geq1}\subset\mathcal B_\rho$ satisfy
$\theta_m\to\theta^\star$. Almost-sure continuity gives
\[
\nabla_{\theta_m}U_{\theta_m}(X,Y)
\longrightarrow
\nabla_{\theta^\star}U_{\theta^\star}(X,Y)
\quad\text{a.s.}
\]
Moreover, we have
\begin{align*}
&\|\nabla_{\theta_m}U_{\theta_m}(X,Y)
-\nabla_{\theta^\star}U_{\theta^\star}(X,Y)\|\leq
2\sup_{\theta\in\mathcal B_\rho}
\|\nabla_\theta U_\theta(X,Y)\|,
\end{align*}
whose expectation is finite by Assumption~\ref{assump:basic}. Hence, the
dominated convergence theorem gives
\[
\E\!\left[
\|\nabla_{\theta_m}U_{\theta_m}(X,Y)
-\nabla_{\theta^\star}U_{\theta^\star}(X,Y)\|
\right]\longrightarrow0.
\]
Therefore,
\begin{align*}
&\left\|
\E\{\nabla_{\theta_m}U_{\theta_m}(X,Y)\}
-\E\{\nabla_{\theta^\star}U_{\theta^\star}(X,Y)\}
\right\|\leq
\E\!\left[
\|\nabla_{\theta_m}U_{\theta_m}(X,Y)
-\nabla_{\theta^\star}U_{\theta^\star}(X,Y)\|
\right]\longrightarrow0.
\end{align*}
Since the sequence $\{\theta_m\}_{m\geq1}$ was arbitrary, the sequential
characterization of continuity implies that
$\theta\mapsto\E\{\nabla_\theta U_\theta(X,Y)\}$ is continuous at
$\theta^\star$. Moreover,
\[
I_d-H_{\theta^\star}^{-1}
\E\{\nabla_\theta U_{\theta^\star}(X,Y)\}
=I_d-H_{\theta^\star}^{-1}H_{\theta^\star}=0.
\]
Fix $\kappa\in(0,1)$. By continuity, there exists $\delta>0$ such that
\[
\left\|
\E\{\nabla_\theta U_\theta(X,Y)\}
-\E\{\nabla_\theta U_{\theta^\star}(X,Y)\}
\right\|
<\frac{\kappa}{2\|H_{\theta^\star}^{-1}\|}
\]
whenever $\|\theta-\theta^\star\|<\delta$. Choose
$a\in(0,\min\{\rho,\delta\})$. Then, for every $\theta\in\mathcal B_a$,
\begin{align*}
\left\|I_d-H_{\theta^\star}^{-1}
\E\{\nabla_\theta U_\theta(X,Y)\}\right\|&=
\left\|H_{\theta^\star}^{-1}
\left[
\E\{\nabla_\theta U_{\theta^\star}(X,Y)\}
-\E\{\nabla_\theta U_\theta(X,Y)\}
\right]\right\|\\
&\quad\leq
\|H_{\theta^\star}^{-1}\|
\left\|
\E\{\nabla_\theta U_{\theta^\star}(X,Y)\}
-\E\{\nabla_\theta U_\theta(X,Y)\}
\right\|
\leq\frac{\kappa}{2}.
\end{align*}
Hence,
\begin{equation}
\sup_{\theta\in\mathcal B_a}
\left\|I_d-H_{\theta^\star}^{-1}
\E\{\nabla_\theta U_\theta(X,Y)\}\right\|
\leq\frac{\kappa}{2}.
\label{eq:fixed-population-contraction}
\end{equation}
}
For every $\theta\in\mathcal B_a$, adding and subtracting
$H_{\theta^\star}^{-1}\E\{\nabla_\theta U_\theta(X,Y)\}$ gives
\begin{align*}
I_d-H_{\theta^\star}^{-1}
\nabla_\theta\Psi_{n,N}(\theta;h)&=
I_d-H_{\theta^\star}^{-1}
\E\{\nabla_\theta U_\theta(X,Y)\}\\
&\quad+
H_{\theta^\star}^{-1}
\left[
\E\{\nabla_\theta U_\theta(X,Y)\}
-\nabla_\theta\Psi_{n,N}(\theta;h)
\right].
\end{align*}
Therefore, the triangle inequality and $\|AB\|\leq\|A\|\|B\|$ give
\begin{align*}
\left\|I_d-H_{\theta^\star}^{-1}
\nabla_\theta\Psi_{n,N}(\theta;h)\right\|&\leq
\left\|I_d-H_{\theta^\star}^{-1}
\E\{\nabla_\theta U_\theta(X,Y)\}\right\|\\
&\quad+
\|H_{\theta^\star}^{-1}\|
\left\|\nabla_\theta\Psi_{n,N}(\theta;h)
-\E\{\nabla_\theta U_\theta(X,Y)\}\right\|.
\end{align*}
{Taking the supremum over $\theta\in\mathcal B_a$, and using
\eqref{eq:fixed-population-contraction} and
$\mathcal B_a\subset\mathcal B_\rho$, gives}
\begin{align*}
\sup_{\theta\in\mathcal B_a}
\left\|I_d-H_{\theta^\star}^{-1}
\nabla_\theta\Psi_{n,N}(\theta;h)\right\|&\leq \frac{\kappa}{2}
+\|H_{\theta^\star}^{-1}\|
\sup_{\theta\in\mathcal B_\rho}
\left\|\nabla_\theta\Psi_{n,N}(\theta;h)
-\E\{\nabla_\theta U_\theta(X,Y)\}\right\|\\
&=\frac{\kappa}{2}+o_p(1),
\end{align*}
where the last equality follows from
\eqref{eq:fixed-jacobian-ulln} and
$\|H_{\theta^\star}^{-1}\|<\infty$.
Consequently, since
$\nabla_\theta T_{n,N}(\theta)
=I_d-H_{\theta^\star}^{-1}\nabla_\theta\Psi_{n,N}(\theta;h)$, we have
{
\begin{equation}
\Pr\!\left\{
\sup_{\theta\in\mathcal B_a}
\|\nabla_\theta T_{n,N}(\theta)\|\leq\kappa
\right\}
=
\Pr\!\left\{
\sup_{\theta\in\mathcal B_a}
\left\|I_d-H_{\theta^\star}^{-1}
\nabla_\theta\Psi_{n,N}(\theta;h)\right\|\leq\kappa
\right\}\longrightarrow1.
\label{eq:fixed-contraction-event}
\end{equation}
}
It remains to verify that $T_{n,N}$ maps $\mathcal B_a$ into itself. We first
control its displacement at the center:
\[
T_{n,N}(\theta^\star)-\theta^\star
=-H_{\theta^\star}^{-1}\Psi_{n,N}(\theta^\star;h).
\]
Because $\E\{U_{\theta^\star}(X,Y)\}=0$ and the two samples have the same
$(X,Q)$ marginal distribution,
{
\begin{align*}
\Psi_{n,N}(\theta^\star;h)
={}&\frac1n\sum_{i=1}^n
\Big[U_{\theta^\star}(X_i,Y_i)-c_nh_{\theta^\star}(X_i,Q_i)
-\E\{U_{\theta^\star}(X,Y)-c_nh_{\theta^\star}(X,Q)\}\Big]\\
&+c_n\frac1N\sum_{j=n+1}^{n+N}
\Big[h_{\theta^\star}(X_j,Q_j)
-\E\{h_{\theta^\star}(X,Q)\}\Big].
\end{align*}
}
By sample independence, Assumptions~\ref{assump:basic}
and~\ref{assump:fixed-correction}, and $n/N\to r$, we have
{
\begin{align*}
\E\bigl\|\sqrt n\,\Psi_{n,N}(\theta^\star;h)\bigr\|^2
={}&\E\Bigl\|U_{\theta^\star}(X,Y)-c_nh_{\theta^\star}(X,Q)
-\E\{U_{\theta^\star}(X,Y)-c_nh_{\theta^\star}(X,Q)\}\Bigr\|^2\\
&+\frac nN c_n^2\E\Bigl\|h_{\theta^\star}(X,Q)
-\E\{h_{\theta^\star}(X,Q)\}\Bigr\|^2
\\
\leq{}&
2\E\|U_{\theta^\star}(X,Y)\|^2
+\left(2+\frac nN\right)
\E\|h_{\theta^\star}(X,Q)\|^2
=O(1).
\end{align*}
}
{
Therefore, Markov's inequality gives
\begin{equation}
\sqrt n\,\Psi_{n,N}(\theta^\star;h)=O_p(1),
\qquad
H_{\theta^\star}^{-1}\Psi_{n,N}(\theta^\star;h)
=O_p(n^{-1/2})=o_p(1).
\label{eq:fixed-center-rate}
\end{equation}
}
{
Since $(1-\kappa)a>0$, \eqref{eq:fixed-center-rate} implies
\begin{equation}
\Pr\!\left\{
\|H_{\theta^\star}^{-1}\Psi_{n,N}(\theta^\star;h)\|
\leq(1-\kappa)a
\right\}\longrightarrow1.
\label{eq:fixed-center-event}
\end{equation}
}
{
Equations \eqref{eq:fixed-contraction-event} and
\eqref{eq:fixed-center-event}, together with the union bound, give
\begin{equation}
\Pr\!\left(
\left\{
\sup_{\theta\in\mathcal B_a}
\|\nabla_\theta T_{n,N}(\theta)\|\leq\kappa
\right\}
\cap
\left\{
\|H_{\theta^\star}^{-1}\Psi_{n,N}(\theta^\star;h)\|
\leq(1-\kappa)a
\right\}
\right)\longrightarrow1.
\label{eq:fixed-banach-event}
\end{equation}
}
{
On the event inside \eqref{eq:fixed-banach-event},
\eqref{eq:fixed-contraction-criterion} shows that $T_{n,N}$ is a
$\kappa$-contraction on $\mathcal B_a$. For every
$\theta\in\mathcal B_a$, the two conditions inside
\eqref{eq:fixed-banach-event} give
\[
\|T_{n,N}(\theta)-\theta^\star\|
\leq\kappa\|\theta-\theta^\star\|
+\|H_{\theta^\star}^{-1}\Psi_{n,N}(\theta^\star;h)\|
\leq\kappa a+(1-\kappa)a=a.
\]
Thus, on the event inside \eqref{eq:fixed-banach-event}, $T_{n,N}$ is a
contraction from $\mathcal B_a$ into itself. The Banach fixed-point theorem
and \eqref{eq:fixed-banach-event} therefore yield
\begin{equation}
\Pr\{\exists!\,\theta\in\mathcal B_a:
\Psi_{n,N}(\theta;h)=0\}
\longrightarrow1.
\label{eq:fixed-root-existence}
\end{equation}
}
{
{On the event inside \eqref{eq:fixed-banach-event}, the contraction inequality
and the fixed-point identity
$\widehat\theta_h=T_{n,N}(\widehat\theta_h)$ give}
\begin{align*}
\|\widehat\theta_h-\theta^\star\|
&=\|T_{n,N}(\widehat\theta_h)-\theta^\star\|\\
&\leq
\|T_{n,N}(\widehat\theta_h)-T_{n,N}(\theta^\star)\|
+\|T_{n,N}(\theta^\star)-\theta^\star\|\\
&\leq
\kappa\|\widehat\theta_h-\theta^\star\|
+\|H_{\theta^\star}^{-1}\Psi_{n,N}(\theta^\star;h)\|,
\end{align*}
where the last step uses
$T_{n,N}(\theta^\star)-\theta^\star
=-H_{\theta^\star}^{-1}\Psi_{n,N}(\theta^\star;h)$.
Therefore,
\[
(1-\kappa)\|\widehat\theta_h-\theta^\star\|
\leq
\|H_{\theta^\star}^{-1}\Psi_{n,N}(\theta^\star;h)\|,
\]
and hence
\begin{equation}
\|\widehat\theta_h-\theta^\star\|
\leq\frac{1}{1-\kappa}
\|H_{\theta^\star}^{-1}\Psi_{n,N}(\theta^\star;h)\|
\leq\frac{\|H_{\theta^\star}^{-1}\|}{1-\kappa}
\|\Psi_{n,N}(\theta^\star;h)\|.
\label{eq:fixed-root-bound}
\end{equation}
{Equations \eqref{eq:fixed-banach-event}, \eqref{eq:fixed-center-rate}, and
\eqref{eq:fixed-root-bound} imply}
\begin{equation}
\|\widehat\theta_h-\theta^\star\|=O_p(n^{-1/2}).
\label{eq:fixed-root-rate}
\end{equation}
}
{Equation \eqref{eq:fixed-root-rate} also implies}
\[
\Pr\{\widehat\theta_h\in\operatorname{int}(\mathcal B_a)\}
=\Pr\{\|\widehat\theta_h-\theta^\star\|<a\}
\longrightarrow1.
\]

{We next derive the asymptotic distribution.} Define
\[
\overline H_{n,N}
=\int_0^1\nabla_\theta\Psi_{n,N}
\{\theta^\star+t(\widehat\theta_h-\theta^\star);h\}\,dt.
\]
{
On the event $\{\widehat\theta_h\in\mathcal B_a\}$,
\begin{align*}
\|\overline H_{n,N}-H_{\theta^\star}\|
&\leq
\sup_{\theta\in\mathcal B_\rho}
\left\|\nabla_\theta\Psi_{n,N}(\theta;h)
-\E\{\nabla_\theta U_\theta(X,Y)\}\right\|\\
&\quad+
\sup_{0\leq t\leq1}
\left\|
\E\!\left[
\nabla_{\theta^\star+t(\widehat\theta_h-\theta^\star)}
U_{\theta^\star+t(\widehat\theta_h-\theta^\star)}(X,Y)
\right]
-H_{\theta^\star}
\right\|=o_p(1),
\end{align*}
{The $o_p(1)$ comes from Equations~\eqref{eq:fixed-jacobian-ulln} and \eqref{eq:fixed-root-rate}. Specifically,
\eqref{eq:fixed-jacobian-ulln} gives
\[
\sup_{\theta\in\mathcal B_\rho}
\left\|\nabla_\theta\Psi_{n,N}(\theta;h)
-\E\{\nabla_\theta U_\theta(X,Y)\}\right\|=o_p(1).
\]
\eqref{eq:fixed-root-rate} and continuity of
$\theta\mapsto\E\{\nabla_\theta U_\theta(X,Y)\}$ give
\[
\sup_{0\leq t\leq1}
\left\|
\E\!\left[
\nabla_{\theta^\star+t(\widehat\theta_h-\theta^\star)}
U_{\theta^\star+t(\widehat\theta_h-\theta^\star)}(X,Y)
\right]-H_{\theta^\star}
\right\|=o_p(1).
\]
Hence, $\overline H_{n,N}\xrightarrow{p}H_{\theta^\star}$.
} Since
$H_{\theta^\star}$ is nonsingular, the continuous mapping theorem gives
}
{
{
\begin{equation}
\Pr\{\overline H_{n,N}\text{ is nonsingular}\}\longrightarrow1,
\qquad
\overline H_{n,N}^{-1}\xrightarrow{p}H_{\theta^\star}^{-1}.
\label{eq:fixed-average-jacobian}
\end{equation}
}
{
Equations \eqref{eq:fixed-root-existence} and
\eqref{eq:fixed-average-jacobian} imply
\begin{equation}
\Pr\!\left(
\{\Psi_{n,N}(\widehat\theta_h;h)=0\}
\cap
\{\overline H_{n,N}\text{ is nonsingular}\}
\right)\longrightarrow1.
\label{eq:fixed-linearization-event}
\end{equation}
}
{On the event inside \eqref{eq:fixed-linearization-event}, the fundamental
theorem of calculus along the line
segment from $\theta^\star$ to $\widehat\theta_h$ gives}
\[
0=\Psi_{n,N}(\theta^\star;h)
+\overline H_{n,N}(\widehat\theta_h-\theta^\star).
\]
{
Therefore,
\begin{equation}
\sqrt n(\widehat\theta_h-\theta^\star)
=-\overline H_{n,N}^{-1}\sqrt n\,
\Psi_{n,N}(\theta^\star;h).
\label{eq:fixed-exact-linearization}
\end{equation}
}
{
Equations \eqref{eq:fixed-average-jacobian} and
\eqref{eq:fixed-center-rate} give
\begin{equation}
\left\|
(\overline H_{n,N}^{-1}-H_{\theta^\star}^{-1})
\sqrt n\,\Psi_{n,N}(\theta^\star;h)
\right\|=o_p(1).
\label{eq:fixed-jacobian-replacement}
\end{equation}
}
{Combining \eqref{eq:fixed-linearization-event},
\eqref{eq:fixed-exact-linearization}, and
\eqref{eq:fixed-jacobian-replacement} yields}
\begin{equation}
\sqrt n(\widehat\theta_h-\theta^\star)
=-H_{\theta^\star}^{-1}\sqrt n\,
\Psi_{n,N}(\theta^\star;h)+o_p(1).
\label{eq:fixed-proof-linearization}
\end{equation}
}
Moreover,
{
\begin{align}
\sqrt n\,\Psi_{n,N}(\theta^\star;h)
={}&\frac1{\sqrt n}\sum_{i=1}^n
\Bigl[
U_{\theta^\star}(X_i,Y_i)-c_nh_{\theta^\star}(X_i,Q_i)
-\E\{U_{\theta^\star}(X,Y)-c_nh_{\theta^\star}(X,Q)\}
\Bigr]\notag\\
&+c_n\frac{\sqrt n}{N}\sum_{j=n+1}^{n+N}
\Bigl[
h_{\theta^\star}(X_j,Q_j)-\E\{h_{\theta^\star}(X,Q)\}
\Bigr].
\label{eq:fixed-score-decomposition}
\end{align}
}
{
Set $c=(1+r)^{-1}$. Since $n/N\to r$, we have $c_n\to c$.
By sample independence and the multivariate central limit theorem,
\eqref{eq:fixed-score-decomposition} gives
\begin{equation}
\sqrt n\,\Psi_{n,N}(\theta^\star;h)
\xrightarrow{d}
N\!\left(0,
\frac{1}{1+r}\Cov\{U_{\theta^\star}(X,Y)-h_{\theta^\star}(X,Q)\}
+\frac{r}{1+r}\Cov\{U_{\theta^\star}(X,Y)\}
\right){.}
\label{eq:fixed-score-clt}
\end{equation}
}
The covariance in~\eqref{eq:fixed-score-clt} follows from
\[
\begin{aligned}
\Cov\{U_{\theta^\star}(X,Y)-c h_{\theta^\star}(X,Q)\}
&+rc^2\Cov\{h_{\theta^\star}(X,Q)\}\\
&=c\Cov\{U_{\theta^\star}(X,Y)-h_{\theta^\star}(X,Q)\}
+(1-c)\Cov\{U_{\theta^\star}(X,Y)\}.
\end{aligned}
\]
{Equations \eqref{eq:fixed-proof-linearization} and
\eqref{eq:fixed-score-clt}, together with Slutsky's theorem,
yield}
\[
\sqrt n(\widehat\theta_h-\theta^\star)
\xrightarrow{d}
N\!\left(0,
H_{\theta^\star}^{-1}\left[
\begin{aligned}
&\frac{1}{1+r}\Cov\{U_{\theta^\star}(X,Y)-h_{\theta^\star}(X,Q)\}\\
&\quad+\frac{r}{1+r}\Cov\{U_{\theta^\star}(X,Y)\}
\end{aligned}
\right]H_{\theta^\star}^{-\top}
\right),
\]
which {proves}~\eqref{eq:general-clt}.
{It remains to prove the oracle optimality statement.}
\begin{samepage}
{For any $h$ satisfying
Assumption~\ref{assump:fixed-correction}},
{
\begin{align}
&\frac{1}{1+r}\Cov\{U_{\theta^\star}(X,Y)-h_{\theta^\star}(X,Q)\}
+\frac{r}{1+r}\Cov\{U_{\theta^\star}(X,Y)\}\notag\\
&\quad=\frac{1}{1+r}\E\!\left[
\Cov\{U_{\theta^\star}(X,Y)\mid X,Q\}\right]
+\frac{r}{1+r}\Cov\{U_{\theta^\star}(X,Y)\}\notag\\
&\qquad+\frac{1}{1+r}\Cov\!\left[
\E\{U_{\theta^\star}(X,Y)\mid X,Q\}-h_{\theta^\star}(X,Q)
\right],
\label{eq:fixed-oracle-decomposition}
\end{align}
}
\end{samepage}
\begin{samepage}
{
The last term on the right-hand side of
\eqref{eq:fixed-oracle-decomposition} is positive semidefinite and equals
zero for~\eqref{eq:oracle-h-Q}. Substituting \eqref{eq:oracle-h-Q} into
\eqref{eq:fixed-oracle-decomposition} and using the law of total covariance for
$U_{\theta^\star}(X,Y)$, yields~\eqref{eq:oracle-covariance}. More generally,
\[
\Sigma_h-\Sigma_{h^\star}
=\frac1{1+r}H_{\theta^\star}^{-1}\Cov\!\left[
\E\{U_{\theta^\star}(X,Y)\mid X,Q\}-h_{\theta^\star}(X,Q)
\right]H_{\theta^\star}^{-\top}\succeq0.
\]
This proves $\Sigma_{h^\star}\preceq\Sigma_h$.
}
\end{samepage}
}
\end{proof}

\subsection{Proof of Theorem~\ref{thm:feasible-dippi}}

{The following lemma isolates the effect of nuisance
estimation. It shows that each learned fold correction is first-order
equivalent to the correction formed with the fixed population quantities
$M_\mu$ and $\mu$. The theorem proof then treats this population correction
as the leading term.}

\begin{samepage}
\begin{lemma}
\label{lem:foldwise-replacement}
{Suppose the conditions of
Theorem~\ref{thm:feasible-dippi} hold.}
Define
\[
M_\mu=
\Cov\{U_{\theta^\star}(X,Y),\mu(X,Q)\}
\Cov\{\mu(X,Q)\}^{-1}.
\]
Then, for every $k=1,2,3$, $\widehat M^k\xrightarrow{p}M_\mu$, and
\begin{align*}
R_{n,k}:=
\sqrt n\Bigg[&
\frac{\widehat M^k}{1+n/N}\left\{
\frac1N\sum_{j=n+1}^{n+N}\widehat\mu^k(X_j,Q_j)
-\frac1{n_k}\sum_{i\in\mathcal D_k}\widehat\mu^k(X_i,Q_i)
\right\}\\
&-\frac{M_\mu}{1+r}\left\{
\frac1N\sum_{j=n+1}^{n+N}\mu(X_j,Q_j)
-\frac1{n_k}\sum_{i\in\mathcal D_k}\mu(X_i,Q_i)
\right\}\Bigg]=o_p(1).
\end{align*}
\end{lemma}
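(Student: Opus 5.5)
Throughout, I condition on the two folds used to build $(\widehat\theta_0^k,\widehat\mu^k)$. Write $\mathcal F_k$ for the $\sigma$-field they generate. Given $\mathcal F_k$, the function $\widehat\mu^k$ is fixed and square-integrable with probability tending to one. The fold $\mathcal D_k$ and the unlabeled sample are then i.i.d. draws of $(X,Y,Q)$ and $(X,Q)$, and they are independent of $\mathcal F_k$. The plan has two parts: consistency of $\widehat M^k$, then a split of $R_{n,k}$ into a nuisance-error term and a matrix-error term.

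\textbf{Step 1: $\widehat M^k\xrightarrow{p}M_\mu$.} The matrix $\widehat M^k$ is built from empirical covariances over $\mathcal D_k$ of $U_{\widehat\theta_0^k}(X,Y)$ and $\widehat\mu^k(X,Q)$. I would compare these with the population quantities in three moves.

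First, replace $U_{\widehat\theta_0^k}$ by $U_{\theta^\star}$. The labeled-only root $\widehat\theta_0^k$ is consistent by Theorem~\ref{thm:fixed-correction} with $h=0$, applied on a single fold, since $n_k\to\infty$. The mean-value bound
\[
\|U_{\widehat\theta_0^k}-U_{\theta^\star}\|\leq\sup_{\mathcal B_\rho}\|\nabla_\theta U_\theta\|\,\|\widehat\theta_0^k-\theta^\star\|
\]
then holds, and Assumption~\ref{assump:nuisance} gives $\E[\sup_{\mathcal B_\rho}\|\nabla_\theta U_\theta\|^2]<\infty$. Together these show that the fold average of $\|U_{\widehat\theta_0^k}-U_{\theta^\star}\|^2$ is $o_p(1)$.

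Second, replace $\widehat\mu^k$ by $\mu$. Conditionally on $\mathcal F_k$, the fold average of $\|\widehat\mu^k-\mu\|^2$ has mean $\E\|\widehat\mu^k-\mu\|^2=o_p(1)$. A conditional Markov inequality therefore gives that this average is $o_p(1)$.

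Third, apply Cauchy--Schwarz to the cross terms. Combined with the weak law of large numbers for the empirical covariances of $(U_{\theta^\star},\mu)$, this yields
\[
\widehat{\Cov}\{U_{\widehat\theta_0^k},\widehat\mu^k\}\to\Cov\{U_{\theta^\star},\mu\},\qquad
\widehat{\Cov}\{\widehat\mu^k\}\to\Cov\{\mu\}\succ0,
\]
both in probability. Because $\Cov\{\mu\}\succ0$, matrix inversion is continuous there. So the generalized inverse coincides with the ordinary inverse with probability tending to one, and the continuous mapping theorem gives $\widehat M^k\xrightarrow{p}M_\mu$.

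\textbf{Step 2: decomposition of $R_{n,k}$.} Let $\Delta=\widehat\mu^k-\mu$. For a function $g$, write
\[
D_n(g)=\frac1N\sum_{j}g(X_j,Q_j)-\frac1{n_k}\sum_{i\in\mathcal D_k}g(X_i,Q_i).
\]
Then
\[
R_{n,k}=\sqrt n\,\frac{\widehat M^k}{1+n/N}D_n(\Delta)+\sqrt n\Big(\frac{\widehat M^k}{1+n/N}-\frac{M_\mu}{1+r}\Big)D_n(\mu).
\]
For the second term, $\sqrt n D_n(\mu)=O_p(1)$ by the central limit theorem. This uses that $\mu$ is square integrable, both samples share the $(X,Q)$ law, and $n/n_k\to1/\pi_k$. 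The matrix factor is $o_p(1)$ by Step 1 and $n/N\to r$, so the second term is $o_p(1)$.

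\textbf{Step 3 (main obstacle): the nuisance-error term $\sqrt n D_n(\Delta)$.} Here cross-fitting is essential. Conditionally on $\mathcal F_k$, both averages in $D_n(\Delta)$ have the same mean $\E[\Delta(X,Q)\mid\mathcal F_k]$, because the two samples share the $(X,Q)$ law. So $D_n(\Delta)$ has conditional mean zero, and by independence its conditional second moment satisfies
\[
\E\{\|\sqrt n D_n(\Delta)\|^2\mid\mathcal F_k\}\leq\Big(\frac nN+\frac n{n_k}\Big)\E\{\|\Delta(X,Q)\|^2\mid\mathcal F_k\}=O(1)\cdot o_p(1).
\]
A conditional Chebyshev inequality, together with the fact that bounded conditional probabilities tending to zero in probability have expectations tending to zero (dominated convergence), gives $\sqrt nD_n(\Delta)=o_p(1)$ unconditionally. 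Multiplying by $\widehat M^k/(1+n/N)=O_p(1)$ makes the first term $o_p(1)$, and hence $R_{n,k}=o_p(1)$.

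The delicate points are these. The conditioning must be handled rigorously, since the $o_p(1)$ assumption on $\E\|\widehat\mu^k-\mu\|^2$ is a statement about a random, $\mathcal F_k$-measurable function. One must also check that $\widehat\theta_0^k$ lands in $\mathcal B_\rho$ with probability tending to one, so that the mean-value bound in Step 1 applies.
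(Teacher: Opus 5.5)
Your proposal is correct and follows essentially the same route as the paper. You get consistency of $\widehat M^k$ through empirical $L_2$ replacement of $U_{\widehat\theta_0^k}$ and $\widehat\mu^k$, Cauchy--Schwarz, the law of large numbers and continuity of inversion. You then use the cross-fitted conditional second-moment bound $(n/N+n/n_k)\E\|\widehat\mu^k-\mu\|^2$ for the nuisance term and the central limit theorem for $\sqrt n\,D_n(\mu)$. The differences are cosmetic: you merge the paper's two matrix-error terms ($\widehat M^k-M_\mu$ and $1/(1+n/N)-1/(1+r)$) into one, and you state explicitly that the generalized inverse coincides with the ordinary inverse with probability tending to one.
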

The proof of Lemma~\ref{lem:foldwise-replacement} is given in Appendix~\ref{app:proof-foldwise-replacement}.
\end{samepage}

\begin{proof}
{
The proof has three steps. We first establish existence, uniqueness, and the
$n^{-1/2}$ rate of each fold-specific root. We then derive an asymptotic
linear representation of the weighted estimator. Finally, we apply the
multivariate central limit theorem and calculate the limiting covariance.

For $k=1,2,3$, define the fold estimating function
\begin{align*}
\Psi_k(\theta)
={}&\frac1{n_k}\sum_{i\in\mathcal D_k}U_\theta(X_i,Y_i)+\frac{\widehat M^k}{1+n/N}
\left\{
\frac1N\sum_{j=n+1}^{n+N}\widehat\mu^k(X_j,Q_j)
-\frac1{n_k}\sum_{i\in\mathcal D_k}\widehat\mu^k(X_i,Q_i)
\right\}.
\end{align*}
Following the center-control step in the proof of
Theorem~\ref{thm:fixed-correction}, with $n$ replaced by $n_k$, we first
control $\Psi_k(\theta^\star)$. Since
$\E\{U_{\theta^\star}(X,Y)\}=0$, Assumption~\ref{assump:basic}, and
$n_k/n\to\pi_k\in(0,1)$ give
\[
\E\left\|
\frac{\sqrt n}{n_k}\sum_{i\in\mathcal D_k}
U_{\theta^\star}(X_i,Y_i)
\right\|^2
=\frac n{n_k}\E\|U_{\theta^\star}(X,Y)\|^2
=O(1).
\]
Markov's inequality therefore gives
{
\begin{equation}
\frac1{n_k}\sum_{i\in\mathcal D_k}
U_{\theta^\star}(X_i,Y_i)
=O_p(n^{-1/2}).
\label{eq:fold-U-center-rate}
\end{equation}
}
Since $\mu(X,Q)$ is square integrable, sample independence,
$n/N\to r$, $n_k/n\to\pi_k$, and the multivariate central limit theorem
give
\[
\sqrt n\left\{
\frac1N\sum_{j=n+1}^{n+N}\mu(X_j,Q_j)
-\frac1{n_k}\sum_{i\in\mathcal D_k}\mu(X_i,Q_i)
\right\}=O_p(1).
\]
Lemma~\ref{lem:foldwise-replacement} then gives
{
\begin{equation}
\frac{\widehat M^k}{1+n/N}
\left\{
\frac1N\sum_{j=n+1}^{n+N}\widehat\mu^k(X_j,Q_j)
-\frac1{n_k}\sum_{i\in\mathcal D_k}\widehat\mu^k(X_i,Q_i)
\right\}=O_p(n^{-1/2}).
\label{eq:fold-correction-center-rate}
\end{equation}
}
{By the definition of $\Psi_k$,
\eqref{eq:fold-U-center-rate} and
\eqref{eq:fold-correction-center-rate} imply}
\begin{equation}
\max_{1\leq k\leq3}\|\Psi_k(\theta^\star)\|
=O_p(n^{-1/2}).
\label{eq:fold-center-rate}
\end{equation}

We next verify the contraction conditions. The correction in $\Psi_k$ does
not depend on $\theta$, so
\[
\nabla_\theta\Psi_k(\theta)
=\frac1{n_k}\sum_{i\in\mathcal D_k}
\nabla_\theta U_\theta(X_i,Y_i).
\]
Since $n_k\to\infty$ and there are only three folds, the finite-net
argument used to prove \eqref{eq:fixed-jacobian-ulln}, with $n$ replaced by
$n_k$, gives
\begin{equation}
\max_{1\leq k\leq3}
\sup_{\theta\in\mathcal B_\rho}
\left\|
\nabla_\theta\Psi_k(\theta)
-\E\{\nabla_\theta U_\theta(X,Y)\}
\right\|=o_p(1).
\label{eq:fold-jacobian-ulln}
\end{equation}
Define
\[
T_k(\theta)=\theta-H_{\theta^\star}^{-1}\Psi_k(\theta).
\]
{Then choose $a\in(0,\rho)$ and $\kappa\in(0,1)$ so that the population
derivative bound in~\eqref{eq:fixed-population-contraction} holds.
Equation~\eqref{eq:fold-jacobian-ulln} then gives
\[
\Pr\!\left\{
\max_{1\leq k\leq3}\sup_{\theta\in\mathcal B_a}
\|\nabla_\theta T_k(\theta)\|\leq\kappa
\right\}\longrightarrow1.
\]
Whenever this derivative bound holds, for every
$\theta\in\mathcal B_a$ and $k=1,2,3$,
\[
\|T_k(\theta)-\theta^\star\|
\leq\kappa\|\theta-\theta^\star\|
+\|H_{\theta^\star}^{-1}\Psi_k(\theta^\star)\|.
\]
Equation~\eqref{eq:fold-center-rate} therefore gives
\[
\Pr\!\left\{
\max_{1\leq k\leq3}
\|H_{\theta^\star}^{-1}\Psi_k(\theta^\star)\|
\leq(1-\kappa)a
\right\}\longrightarrow1.
\]
Consequently,
}
\[
\Pr\!\left(
\left[\bigcap_{k=1}^3
\{T_k(\mathcal B_a)\subseteq\mathcal B_a\}\right]
\cap
\left\{
\max_{1\leq k\leq3}
\sup_{\theta\in\mathcal B_a}
\|\nabla_\theta T_k(\theta)\|\leq\kappa
\right\}
\right)\longrightarrow1.
\]
On this event, the Banach fixed-point theorem gives a unique root
$\widehat\theta^k\in\mathcal B_a$ of $\Psi_k(\theta)=0$ for every $k$.
Moreover, the contraction inequality and
$\widehat\theta^k=T_k(\widehat\theta^k)$ give, for each $k$,
\[
(1-\kappa)\|\widehat\theta^k-\theta^\star\|
\leq\|H_{\theta^\star}^{-1}\Psi_k(\theta^\star)\|.
\]
Taking the maximum over the three folds gives
\[
(1-\kappa)
\max_{1\leq k\leq3}\|\widehat\theta^k-\theta^\star\|
\leq
\max_{1\leq k\leq3}
\|H_{\theta^\star}^{-1}\Psi_k(\theta^\star)\|
\leq
\|H_{\theta^\star}^{-1}\|
\max_{1\leq k\leq3}\|\Psi_k(\theta^\star)\|.
\]
Therefore, \eqref{eq:fold-center-rate} yields
\begin{equation}
\max_{1\leq k\leq3}
\|\widehat\theta^k-\theta^\star\|=O_p(n^{-1/2}).
\label{eq:fold-root-rate}
\end{equation}
Since $\widehat\theta^{\mathrm{DiPPI}}=\sum_{k=1}^3
\omega_k\widehat\theta^k$ and $\sum_{k=1}^3\omega_k=1$,
\[
\|\widehat\theta^{\mathrm{DiPPI}}-\theta^\star\|
\leq\sum_{k=1}^3\omega_k
\|\widehat\theta^k-\theta^\star\|
=O_p(n^{-1/2}).
\]
This proves consistency.

We next derive the asymptotic linear representation. For each $k$, define
\[
\overline H_k
=\frac1{n_k}\sum_{i\in\mathcal D_k}\int_0^1
\nabla_\theta U_{\theta^\star+t(\widehat\theta^k-\theta^\star)}
(X_i,Y_i)\,dt.
\]
By \eqref{eq:fold-jacobian-ulln}, \eqref{eq:fold-root-rate}, and the
continuity of $\theta\mapsto\E\{\nabla_\theta U_\theta(X,Y)\}$ at
$\theta^\star$,
\[
\max_{1\leq k\leq3}\|\overline H_k-H_{\theta^\star}\|=o_p(1).
\]
Since $H_{\theta^\star}$ is nonsingular,
\[
\Pr\{\overline H_k\text{ is nonsingular for all }k\}\longrightarrow1,
\qquad
\max_{1\leq k\leq3}
\|\overline H_k^{-1}-H_{\theta^\star}^{-1}\|=o_p(1).
\]
The fundamental theorem of calculus along the segment from $\theta^\star$
to $\widehat\theta^k$ gives
\[
0=\Psi_k(\widehat\theta^k)
=\Psi_k(\theta^\star)
+\overline H_k(\widehat\theta^k-\theta^\star).
\]
Therefore, using \eqref{eq:fold-center-rate},
\begin{align*}
\widehat\theta^k-\theta^\star
&=-\overline H_k^{-1}\Psi_k(\theta^\star)\\
&=-H_{\theta^\star}^{-1}\Psi_k(\theta^\star)
-\{\overline H_k^{-1}-H_{\theta^\star}^{-1}\}
\Psi_k(\theta^\star)\\
&=-H_{\theta^\star}^{-1}\Psi_k(\theta^\star)
+o_p(n^{-1/2}).
\end{align*}
Substituting the definition of $\Psi_k(\theta^\star)$ and applying
Lemma~\ref{lem:foldwise-replacement} give
\begin{align}
\widehat\theta^k-\theta^\star
=-H_{\theta^\star}^{-1}\Bigg[&
\frac1{n_k}\sum_{i\in\mathcal D_k}
U_{\theta^\star}(X_i,Y_i)\notag\\
&+\frac{M_\mu}{1+r}
\left\{
\frac1N\sum_{j=n+1}^{n+N}\mu(X_j,Q_j)
-\frac1{n_k}\sum_{i\in\mathcal D_k}\mu(X_i,Q_i)
\right\}
\Bigg]+o_p(n^{-1/2}).
\label{eq:fold-linear-proof}
\end{align}
The remainder is $o_p(n^{-1/2})$ simultaneously for all three folds.

Since the folds partition the labeled sample,
\[
\sum_{k=1}^3\omega_k
\frac1{n_k}\sum_{i\in\mathcal D_k}f_i
=\frac1n\sum_{i=1}^nf_i,
\qquad
\sum_{k=1}^3\omega_k=1.
\]
Multiplying \eqref{eq:fold-linear-proof} by $\omega_k$, summing over $k$,
and using $\E\{U_{\theta^\star}(X,Y)\}=0$ yield
\begin{equation}
\begin{split}
\sqrt n(\widehat\theta^{\mathrm{DiPPI}}-\theta^\star)
=-H_{\theta^\star}^{-1}\Bigg[
&\frac1{\sqrt n}\sum_{i=1}^n
\left[
U_{\theta^\star}(X_i,Y_i)
-\frac1{1+r}M_\mu
\{\mu(X_i,Q_i)-\E\mu(X,Q)\}
\right]\\
&+\frac{\sqrt n}{N}\sum_{j=n+1}^{n+N}
\frac1{1+r}M_\mu
\{\mu(X_j,Q_j)-\E\mu(X,Q)\}
\Bigg]+o_p(1).
\end{split}
\label{eq:dippi-alr}
\end{equation}

It remains to derive the limiting distribution. The labeled-sample sum and
the unlabeled-sample sum in \eqref{eq:dippi-alr} are independent and have
mean zero. By square
integrability, the multivariate central limit theorem, and $n/N\to r$,
their limiting covariance is
\[
H_{\theta^\star}^{-1}
\left[
\Cov\!\left\{
U_{\theta^\star}(X,Y)-\frac1{1+r}M_\mu\mu(X,Q)
\right\}
+r\Cov\!\left\{\frac1{1+r}M_\mu\mu(X,Q)\right\}
\right]
H_{\theta^\star}^{-\top}.
\]
By the definition of $M_\mu$, we have
{
\begin{align}
&\Cov\{U_{\theta^\star}(X,Y),M_\mu\mu(X,Q)\}\notag\\
&\quad=\Cov\{M_\mu\mu(X,Q),U_{\theta^\star}(X,Y)\}\notag\\
&\quad=\Cov\{M_\mu\mu(X,Q)\}\notag\\
&\quad=\Cov\{U_{\theta^\star}(X,Y),\mu(X,Q)\}
\Cov\{\mu(X,Q)\}^{-1}
\Cov\{\mu(X,Q),U_{\theta^\star}(X,Y)\}.
\label{eq:proof-Mmu-covariance}
\end{align}
}
Expanding
$\Cov\{U_{\theta^\star}(X,Y)-(1+r)^{-1}M_\mu\mu(X,Q)\}
+r\Cov\{(1+r)^{-1}M_\mu\mu(X,Q)\}$
and using~\eqref{eq:proof-Mmu-covariance} gives
{
\begin{align}
&\Cov\!\left\{
U_{\theta^\star}(X,Y)-\frac1{1+r}M_\mu\mu(X,Q)
\right\}
+r\Cov\!\left\{\frac1{1+r}M_\mu\mu(X,Q)\right\}\notag\\
&\quad=\Cov\{U_{\theta^\star}(X,Y)\}
-\frac2{1+r}\Cov\{M_\mu\mu(X,Q)\}
+\frac{1+r}{(1+r)^2}\Cov\{M_\mu\mu(X,Q)\}\notag\\
&\quad=\Cov\{U_{\theta^\star}(X,Y)\}
-\frac1{1+r}\Cov\{M_\mu\mu(X,Q)\}.
\label{eq:proof-leading-covariance}
\end{align}
}
{Substituting \eqref{eq:proof-Mmu-covariance} into
\eqref{eq:proof-leading-covariance}, the limiting covariance is}
\begin{align*}
H_{\theta^\star}^{-1}
\Bigl[&\Cov\{U_{\theta^\star}(X,Y)\}
-\frac1{1+r}
\Cov\{U_{\theta^\star}(X,Y),\mu(X,Q)\}
\Cov\{\mu(X,Q)\}^{-1}\\
&\quad\times
\Cov\{\mu(X,Q),U_{\theta^\star}(X,Y)\}\Bigr]
H_{\theta^\star}^{-\top}
=\Sigma_\mu^{\mathrm{DiPPI}}.
\end{align*}
Slutsky's theorem therefore yields
\[
\sqrt n(\widehat\theta^{\mathrm{DiPPI}}-\theta^\star)
\xrightarrow{d}
N\!\left(0,\Sigma_\mu^{\mathrm{DiPPI}}\right),
\]
where $\Sigma_\mu^{\mathrm{DiPPI}}$ is given in
\eqref{eq:dippi-covariance}.
}
\end{proof}

\subsection{Proof of Lemma~\ref{lem:foldwise-replacement}}
\label{app:proof-foldwise-replacement}

\begin{proof}
{
Fix $k$ and condition on the two folds used to construct
$\widehat\theta_0^k$ and $\widehat\mu^k$. The fitted quantities are then
fixed, while $\mathcal D_k$ and the unlabeled sample remain independent.
Expectations involving fitted quantities below are taken over an independent
observation, conditional on the training folds.

We first prove $\widehat M^k\xrightarrow{p}M_\mu$. By
\eqref{eq:M-hat},
\[
\widehat M^k
=\widehat{\Cov}_{\mathcal D_k}
\{U_{\widehat\theta_0^k}(X,Y),\widehat\mu^k(X,Q)\}
\left[\widehat{\Cov}_{\mathcal D_k}
\{\widehat\mu^k(X,Q)\}\right]^{-1}.
\]
Since $\Cov\{\mu(X,Q)\}\succ0$, continuity of matrix inversion shows
that it suffices to prove
\begin{align}
\widehat{\Cov}_{\mathcal D_k}
\{U_{\widehat\theta_0^k}(X,Y),\widehat\mu^k(X,Q)\}
&=\Cov\{U_{\theta^\star}(X,Y),\mu(X,Q)\}+o_p(1),
\notag\\
\widehat{\Cov}_{\mathcal D_k}\{\widehat\mu^k(X,Q)\}
&=\Cov\{\mu(X,Q)\}+o_p(1).
\label{eq:proof-covariance-targets}
\end{align}
To prove~\eqref{eq:proof-covariance-targets}, write the two empirical
covariances as
\begin{equation}
\begin{aligned}
\widehat{\Cov}_{\mathcal D_k}
\{U_{\widehat\theta_0^k}(X,Y),\widehat\mu^k(X,Q)\}&=
\frac1{n_k}\sum_{i\in\mathcal D_k}
U_{\widehat\theta_0^k}(X_i,Y_i)\widehat\mu^k(X_i,Q_i)^\top\\
&-
\left\{\frac1{n_k}\sum_{i\in\mathcal D_k}
U_{\widehat\theta_0^k}(X_i,Y_i)\right\}
\left\{\frac1{n_k}\sum_{i\in\mathcal D_k}
\widehat\mu^k(X_i,Q_i)\right\}^{\!\top},\\
\widehat{\Cov}_{\mathcal D_k}\{\widehat\mu^k(X,Q)\}
&=
\frac1{n_k}\sum_{i\in\mathcal D_k}
\widehat\mu^k(X_i,Q_i)\widehat\mu^k(X_i,Q_i)^\top\\
&\qquad-
\left\{\frac1{n_k}\sum_{i\in\mathcal D_k}
\widehat\mu^k(X_i,Q_i)\right\}
\left\{\frac1{n_k}\sum_{i\in\mathcal D_k}
\widehat\mu^k(X_i,Q_i)\right\}^{\!\top}.
\end{aligned}
\label{eq:proof-empirical-covariance-expansions}
\end{equation}
Thus, to prove~\eqref{eq:proof-covariance-targets}, it remains
to establish convergence of the four empirical moments displayed in
\eqref{eq:proof-empirical-covariance-expansions}. We first derive the
empirical $L_2$ bounds used for those four limits. Since every fold size is of order $n$, applying
Theorem~\ref{thm:fixed-correction} with
$h_\theta\equiv0$ to the three pilot estimating equations gives
\[
\max_{1\leq k\leq3}
\|\widehat\theta_0^k-\theta^\star\|=O_p(n^{-1/2}).
\]
On the event $\widehat\theta_0^k\in\mathcal B_\rho$, the integral
mean-value theorem and Assumption~\ref{assump:nuisance} give
\begin{align*}
&\E\|U_{\widehat\theta_0^k}(X,Y)
-U_{\theta^\star}(X,Y)\|^2\leq
\|\widehat\theta_0^k-\theta^\star\|^2
\E\!\left\{
\sup_{\theta\in\mathcal B_\rho}
\|\nabla_\theta U_\theta(X,Y)\|^2
\right\}
=o_p(1).
\end{align*}

Cross-fitting makes $\mathcal D_k$ independent of
$(\widehat\theta_0^k,\widehat\mu^k)$. Thus, for every
$\varepsilon>0$, conditional Markov's inequality gives
\begin{align*}
\Pr\!\left\{
\frac1{n_k}\sum_{i\in\mathcal D_k}
\|\widehat\mu^k(X_i,Q_i)-\mu(X_i,Q_i)\|^2>\varepsilon
\,\middle|\,\widehat\mu^k
\right\}&\leq\frac1\varepsilon
\E\|\widehat\mu^k(X,Q)-\mu(X,Q)\|^2\\
&=o_p(1),\\
\Pr\!\left\{
\frac1{n_k}\sum_{i\in\mathcal D_k}
\|U_{\widehat\theta_0^k}(X_i,Y_i)
-U_{\theta^\star}(X_i,Y_i)\|^2>\varepsilon
\,\middle|\,\widehat\theta_0^k
\right\}&\leq\frac1\varepsilon
\E\|U_{\widehat\theta_0^k}(X,Y)
-U_{\theta^\star}(X,Y)\|^2\\
&=o_p(1).
\end{align*}
 Therefore, we can obtain
\begin{align}
&\frac1{n_k}\sum_{i\in\mathcal D_k}
\|\widehat\mu^k(X_i,Q_i)-\mu(X_i,Q_i)\|^2=o_p(1),
\notag\\
&\frac1{n_k}\sum_{i\in\mathcal D_k}
\|U_{\widehat\theta_0^k}(X_i,Y_i)
-U_{\theta^\star}(X_i,Y_i)\|^2=o_p(1).
\label{eq:proof-empirical-l2}
\end{align}

Assumption~\ref{assump:basic} gives
$\E\|U_{\theta^\star}(X,Y)\|^2<\infty$, while
Theorem~\ref{thm:feasible-dippi} assumes
$\E\|\mu(X,Q)\|^2<\infty$. Hence, the ordinary law of large numbers gives
{
\begin{align}
\frac1{n_k}\sum_{i\in\mathcal D_k}\|\mu(X_i,Q_i)\|^2&=O_p(1),&
\frac1{n_k}\sum_{i\in\mathcal D_k}
\|U_{\theta^\star}(X_i,Y_i)\|^2&=O_p(1).
\label{eq:proof-oracle-empirical-l2}
\end{align}
}
Moreover,
{
\begin{align}
\frac1{n_k}\sum_{i\in\mathcal D_k}
\|\widehat\mu^k(X_i,Q_i)\|^2
&\leq
\frac2{n_k}\sum_{i\in\mathcal D_k}
\|\widehat\mu^k(X_i,Q_i)-\mu(X_i,Q_i)\|^2\notag\\
&\quad+\frac2{n_k}\sum_{i\in\mathcal D_k}\|\mu(X_i,Q_i)\|^2
\notag\\
&=O_p(1).
\label{eq:proof-fitted-empirical-l2}
\end{align}
}
The triangle inequality and the empirical Cauchy--Schwarz inequality give
\begin{align}
&\left\|
\frac1{n_k}\sum_{i\in\mathcal D_k}
U_{\widehat\theta_0^k}(X_i,Y_i)\widehat\mu^k(X_i,Q_i)^\top
-\frac1{n_k}\sum_{i\in\mathcal D_k}
U_{\theta^\star}(X_i,Y_i)\mu(X_i,Q_i)^\top
\right\|_F\notag\\
&\quad\leq
\left\{
\frac1{n_k}\sum_{i\in\mathcal D_k}
\|U_{\widehat\theta_0^k}(X_i,Y_i)
-U_{\theta^\star}(X_i,Y_i)\|^2
\right\}^{1/2}
\left\{
\frac1{n_k}\sum_{i\in\mathcal D_k}
\|\widehat\mu^k(X_i,Q_i)\|^2
\right\}^{1/2}\notag\\
&\qquad+
\left\{
\frac1{n_k}\sum_{i\in\mathcal D_k}
\|U_{\theta^\star}(X_i,Y_i)\|^2
\right\}^{1/2}
\left\{
\frac1{n_k}\sum_{i\in\mathcal D_k}
\|\widehat\mu^k(X_i,Q_i)-\mu(X_i,Q_i)\|^2
\right\}^{1/2}\notag\\
&\quad=o_p(1)O_p(1)+O_p(1)o_p(1)=o_p(1).
\label{eq:proof-cross-moment-replacement}
\end{align}
Similarly,
\begin{align}
&\left\|
\frac1{n_k}\sum_{i\in\mathcal D_k}
\widehat\mu^k(X_i,Q_i)\widehat\mu^k(X_i,Q_i)^\top
-\frac1{n_k}\sum_{i\in\mathcal D_k}
\mu(X_i,Q_i)\mu(X_i,Q_i)^\top
\right\|_F\notag\\
&\quad\leq
\left\{
\frac1{n_k}\sum_{i\in\mathcal D_k}
\|\widehat\mu^k(X_i,Q_i)-\mu(X_i,Q_i)\|^2
\right\}^{1/2}\notag\\
&\qquad\times
\left[
\left\{
\frac1{n_k}\sum_{i\in\mathcal D_k}
\|\widehat\mu^k(X_i,Q_i)\|^2
\right\}^{1/2}
+
\left\{
\frac1{n_k}\sum_{i\in\mathcal D_k}
\|\mu(X_i,Q_i)\|^2
\right\}^{1/2}
\right]\notag\\
&\quad=o_p(1)\{O_p(1)+O_p(1)\}=o_p(1).
\label{eq:proof-second-moment-replacement}
\end{align}
For the empirical means,
\begin{align}
\left\|
\frac1{n_k}\sum_{i\in\mathcal D_k}
\{U_{\widehat\theta_0^k}(X_i,Y_i)
-U_{\theta^\star}(X_i,Y_i)\}
\right\|&\leq
\frac1{n_k}\sum_{i\in\mathcal D_k}
\|U_{\widehat\theta_0^k}(X_i,Y_i)
-U_{\theta^\star}(X_i,Y_i)\|\notag\\
&\leq
\left\{
\frac1{n_k}\sum_{i\in\mathcal D_k}
\|U_{\widehat\theta_0^k}(X_i,Y_i)
-U_{\theta^\star}(X_i,Y_i)\|^2
\right\}^{1/2}\notag\\
&=o_p(1),
\label{eq:proof-U-mean-replacement}\\
\left\|
\frac1{n_k}\sum_{i\in\mathcal D_k}
\{\widehat\mu^k(X_i,Q_i)-\mu(X_i,Q_i)\}
\right\|&\leq
\frac1{n_k}\sum_{i\in\mathcal D_k}
\|\widehat\mu^k(X_i,Q_i)-\mu(X_i,Q_i)\|\notag\\
&\leq
\left\{
\frac1{n_k}\sum_{i\in\mathcal D_k}
\|\widehat\mu^k(X_i,Q_i)-\mu(X_i,Q_i)\|^2
\right\}^{1/2}\notag\\
&=o_p(1).
\label{eq:proof-mu-mean-replacement}
\end{align}
{Equations~\eqref{eq:proof-cross-moment-replacement}--\eqref{eq:proof-mu-mean-replacement}
follow from \eqref{eq:proof-empirical-l2},
\eqref{eq:proof-oracle-empirical-l2}, and
\eqref{eq:proof-fitted-empirical-l2}.} We next show that the corresponding oracle empirical moments
converge to their population values.
The same moment conditions and Cauchy--Schwarz give
\begin{align*}
\E\|U_{\theta^\star}(X,Y)\mu(X,Q)^\top\|
&\leq
\{\E\|U_{\theta^\star}(X,Y)\|^2\}^{1/2}
\{\E\|\mu(X,Q)\|^2\}^{1/2}<\infty,\\
\E\|\mu(X,Q)\mu(X,Q)^\top\|
&=\E\|\mu(X,Q)\|^2<\infty.
\end{align*}
Combining the ordinary law of large numbers with
\eqref{eq:proof-cross-moment-replacement}--\eqref{eq:proof-mu-mean-replacement}
gives
\begin{equation}
\begin{aligned}
\frac1{n_k}\sum_{i\in\mathcal D_k}
U_{\widehat\theta_0^k}(X_i,Y_i)\widehat\mu^k(X_i,Q_i)^\top
&=\E\{U_{\theta^\star}(X,Y)\mu(X,Q)^\top\}+o_p(1),\\
\frac1{n_k}\sum_{i\in\mathcal D_k}
\widehat\mu^k(X_i,Q_i)\widehat\mu^k(X_i,Q_i)^\top
&=\E\{\mu(X,Q)\mu(X,Q)^\top\}+o_p(1),\\
\frac1{n_k}\sum_{i\in\mathcal D_k}
U_{\widehat\theta_0^k}(X_i,Y_i)
&=\E\{U_{\theta^\star}(X,Y)\}+o_p(1),\\
\frac1{n_k}\sum_{i\in\mathcal D_k}\widehat\mu^k(X_i,Q_i)
&=\E\{\mu(X,Q)\}+o_p(1).
\end{aligned}
\label{eq:proof-fitted-moment-limits}
\end{equation}
Substituting \eqref{eq:proof-fitted-moment-limits} into
\eqref{eq:proof-empirical-covariance-expansions} proves
\eqref{eq:proof-covariance-targets}.
Since $\Cov\{\mu(X,Q)\}\succ0$, continuity of matrix inversion gives
\[
\left[
\widehat{\Cov}_{\mathcal D_k}\{\widehat\mu^k(X,Q)\}
\right]^{-1}
=\Cov\{\mu(X,Q)\}^{-1}+o_p(1).
\]
Hence,
\begin{equation}
\widehat M^k
=M_\mu+o_p(1)\xrightarrow{p}M_\mu,
\qquad
\|\widehat M^k\|=O_p(1).
\label{eq:proof-M-consistency}
\end{equation}

It remains to replace the fitted correction by its population limit.
Conditional on the training folds, the labeled and unlabeled averages of
$\widehat\mu^k(X,Q)-\mu(X,Q)$ have the same mean. Their independence gives
\begin{align*}
&\E\Bigg\|
\sqrt n\Bigg[
\frac1N\sum_{j=n+1}^{n+N}
\{\widehat\mu^k(X_j,Q_j)-\mu(X_j,Q_j)\}
-\frac1{n_k}\sum_{i\in\mathcal D_k}
\{\widehat\mu^k(X_i,Q_i)-\mu(X_i,Q_i)\}
\Bigg]\Bigg\|^2\\
&\quad=
\left(\frac nN+\frac n{n_k}\right)
\E\Big\|
\widehat\mu^k(X,Q)-\mu(X,Q)
-\E\{\widehat\mu^k(X,Q)-\mu(X,Q)\}
\Big\|^2\\
&\quad\leq
\left(\frac nN+\frac n{n_k}\right)
\E\|\widehat\mu^k(X,Q)-\mu(X,Q)\|^2
=o_p(1).
\end{align*}
Conditional Markov's inequality and boundedness of conditional probabilities
therefore imply
\begin{equation}
\sqrt n\Bigg[
\frac1N\sum_{j=n+1}^{n+N}
\{\widehat\mu^k(X_j,Q_j)-\mu(X_j,Q_j)\}
-\frac1{n_k}\sum_{i\in\mathcal D_k}
\{\widehat\mu^k(X_i,Q_i)-\mu(X_i,Q_i)\}
\Bigg]=o_p(1).
\label{eq:proof-score-replacement}
\end{equation}

Since $\E\|\mu(X,Q)\|^2<\infty$, the multivariate central limit theorem gives
\begin{equation}
\begin{aligned}
&\sqrt n\left\{
\frac1N\sum_{j=n+1}^{n+N}\mu(X_j,Q_j)
-\frac1{n_k}\sum_{i\in\mathcal D_k}\mu(X_i,Q_i)
\right\}\\
&\quad=
\sqrt{\frac nN}\frac1{\sqrt N}
\sum_{j=n+1}^{n+N}\{\mu(X_j,Q_j)-\E\mu(X,Q)\}\\
&\qquad-
\sqrt{\frac n{n_k}}\frac1{\sqrt{n_k}}
\sum_{i\in\mathcal D_k}\{\mu(X_i,Q_i)-\E\mu(X,Q)\}
=O_p(1).
\end{aligned}
\label{eq:proof-limit-score-order}
\end{equation}

Finally, by the definition of $R_{n,k}$ in Lemma~\ref{lem:foldwise-replacement},
\begin{align*}
R_{n,k}
={}&
\frac{\widehat M^k}{1+n/N}\sqrt n\Bigg[
\frac1N\sum_{j=n+1}^{n+N}
\{\widehat\mu^k(X_j,Q_j)-\mu(X_j,Q_j)\}
-\frac1{n_k}\sum_{i\in\mathcal D_k}
\{\widehat\mu^k(X_i,Q_i)-\mu(X_i,Q_i)\}
\Bigg]\\
&\qquad+
\frac{\widehat M^k-M_\mu}{1+n/N}\sqrt n\left\{
\frac1N\sum_{j=n+1}^{n+N}\mu(X_j,Q_j)
-\frac1{n_k}\sum_{i\in\mathcal D_k}\mu(X_i,Q_i)
\right\}\\
&\qquad+
\left\{\frac1{1+n/N}-\frac1{1+r}\right\}M_\mu
\sqrt n\left\{
\frac1N\sum_{j=n+1}^{n+N}\mu(X_j,Q_j)
-\frac1{n_k}\sum_{i\in\mathcal D_k}\mu(X_i,Q_i)
\right\}.
\end{align*}
By \eqref{eq:proof-M-consistency}, \eqref{eq:proof-score-replacement}, and
\eqref{eq:proof-limit-score-order}, the three terms on the right are
$o_p(1)$. Therefore, $R_{n,k}=o_p(1)$, which proves the lemma.
}
\end{proof}

\subsection{Proof of Corollary~\ref{cor:dippi-safety}}

\begin{proof}
Because $\Cov\{\mu(X,Q)\}$ is positive definite,
\[
\begin{split}
&a^\top
\Cov\{U_{\theta^\star}(X,Y),\mu(X,Q)\}
\Cov\{\mu(X,Q)\}^{-1}
\Cov\{\mu(X,Q),U_{\theta^\star}(X,Y)\}a\\
&\qquad=
\left\|
\Cov\{\mu(X,Q)\}^{-1/2}
\Cov\{\mu(X,Q),U_{\theta^\star}(X,Y)\}a
\right\|^2\geq0
\end{split}
\]
for every $a\in\R^d$.
Subtracting~\eqref{eq:dippi-covariance} from the labeled-only covariance gives~\eqref{eq:dippi-safety-gap}. Premultiplication by $H_{\theta^\star}^{-1}$ and postmultiplication by $H_{\theta^\star}^{-\top}$ preserve positive semidefiniteness.

{
For the oracle statement, suppose
$\mu(X,Q)=\E\{U_{\theta^\star}(X,Y)\mid X,\Phi(Q)\}$. By iterated
expectations and $\E\{U_{\theta^\star}(X,Y)\}=0$,
\[
\E\{\mu(X,Q)\}=\E\{U_{\theta^\star}(X,Y)\}=0.
\]
Since $\mu(X,Q)$ is measurable with respect to $(X,\Phi(Q))$, another
application of iterated expectations gives
\begin{align*}
\Cov\{U_{\theta^\star}(X,Y),\mu(X,Q)\}
={}&\E\{U_{\theta^\star}(X,Y)\mu(X,Q)^\top\}\\
={}&\E\!\left[
\E\{U_{\theta^\star}(X,Y)\mid X,\Phi(Q)\}\mu(X,Q)^\top
\right]\\
={}&\E\{\mu(X,Q)\mu(X,Q)^\top\}\\
={}&\Cov\{\mu(X,Q)\}.
\end{align*}
Thus, $M_\mu=I_d$. Substituting this into
\eqref{eq:dippi-covariance} gives \eqref{eq:dippi-oracle-covariance}.
Theorem~\ref{thm:fixed-correction}, applied with $Q$ replaced by $\Phi(Q)$,
shows that \eqref{eq:dippi-oracle-covariance} is the minimum asymptotic covariance
among corrections measurable with respect to $(X,\Phi(Q))$
}
\end{proof}

\subsection{Proof of Theorem~\ref{thm:dippi-variance}}

\begin{proof}
{
We first prove covariance consistency in
\eqref{eq:dippi-variance-consistency}. Since
$\widehat H\xrightarrow{p}H_{\theta^\star}$ and $H_{\theta^\star}$ is
nonsingular, the continuous mapping theorem gives
\[
\widehat H^{-1}\xrightarrow{p}H_{\theta^\star}^{-1}.
\]
Thus, in view of~\eqref{eq:dippi-sandwich-estimator} and $n/N\to r$, it
suffices to prove
\begin{equation}
\begin{split}
    \sum_{k=1}^3\omega_k\widehat{\Cov}_{\mathcal D_k}
    \left\{U_{\widehat\theta^{\mathrm{DiPPI}}}(X_i,Y_i)
    -\frac{\widehat M^k\widehat\mu^k(X_i,Q_i)}{1+n/N}\right\}
    \xrightarrow{p}{}&
    \Cov\left\{U_{\theta^\star}(X,Y)
    -\frac{M_\mu\mu(X,Q)}{1+r}\right\}
\end{split}
\label{eq:variance-proof-labeled-covariance}
\end{equation}
and
\begin{equation}
\begin{split}
\widehat{\Cov}_{\mathrm U}\left\{
\frac1{1+n/N}\sum_{k=1}^3\omega_k
\widehat M^k\widehat\mu^k(X_j,Q_j)
\right\}
\xrightarrow{p}
\Cov\left\{\frac1{1+r}M_\mu\mu(X,Q)\right\}.
\end{split}
\label{eq:variance-proof-unlabeled-covariance}
\end{equation}
After proving \eqref{eq:variance-proof-labeled-covariance} and
\eqref{eq:variance-proof-unlabeled-covariance}, we combine the resulting covariance
consistency with Theorem~\ref{thm:feasible-dippi} and Slutsky's theorem to
prove \eqref{eq:dippi-studentized-clt} and the stated Wald coverage.
}

{We first prove
\eqref{eq:variance-proof-labeled-covariance}.}
After multiplication by $\omega_k=|\mathcal D_k|/n$, every observation in $\mathcal D_k$ has weight $1/n$ in~\eqref{eq:dippi-alr}. The labeled folds are disjoint, so their covariance contributions add with weights $\omega_k$. The same unlabeled observations are used for all three fold estimates, so the three unlabeled corrections are averaged before their covariance is computed, as in~\eqref{eq:dippi-sandwich-estimator}.

For each fixed $k$, write
the difference between the fitted and limiting corrections as
\begin{align*}
&\frac1{1+n/N}\widehat M^k
\{\widehat\mu^k(X_i,Q_i)-\mu(X_i,Q_i)\}+
\left\{\frac1{1+n/N}\widehat M^k-\frac1{1+r}M_\mu\right\}
\mu(X_i,Q_i).
\end{align*}
Their empirical squared norms satisfy
\begin{align*}
&\frac1{n_k}\sum_{i\in\mathcal D_k}
\left\|
\frac{\widehat M^k}{1+n/N}
\{\widehat\mu^k(X_i,Q_i)-\mu(X_i,Q_i)\}
\right\|^2\\
&\quad\leq
\frac{\|\widehat M^k\|^2}{(1+n/N)^2}
\times
\frac1{n_k}\sum_{i\in\mathcal D_k}
\|\widehat\mu^k(X_i,Q_i)-\mu(X_i,Q_i)\|^2
=o_p(1),
\end{align*}
\begin{align*}
&\frac1{n_k}\sum_{i\in\mathcal D_k}
\left\|
\left\{\frac{\widehat M^k}{1+n/N}-\frac{M_\mu}{1+r}\right\}
\mu(X_i,Q_i)
\right\|^2\\
&\quad\leq
\left\|\frac{\widehat M^k}{1+n/N}-\frac{M_\mu}{1+r}\right\|^2\times
\frac1{n_k}\sum_{i\in\mathcal D_k}\|\mu(X_i,Q_i)\|^2
=o_p(1).
\end{align*}
Equation~\eqref{eq:proof-empirical-l2} and
$\|\widehat M^k\|=O_p(1)$ prove the first bound. Equation
\eqref{eq:proof-M-consistency}, $n/N\to r$, and
\[
\frac1{n_k}\sum_{i\in\mathcal D_k}\|\mu(X_i,Q_i)\|^2=O_p(1)
\]
prove the second bound. Combining them and using the fixed number of folds
gives
\begin{equation}
\sum_{k=1}^3\omega_k\frac1{n_k}\sum_{i\in\mathcal D_k}
\left\|
\frac{\widehat M^k\widehat\mu^k(X_i,Q_i)}{1+n/N}
-\frac{M_\mu\mu(X_i,Q_i)}{1+r}
\right\|^2=o_p(1).
\label{eq:variance-proof-labeled-correction}
\end{equation}

{
Theorem~\ref{thm:feasible-dippi} gives
$\widehat\theta^{\mathrm{DiPPI}}\xrightarrow{p}\theta^\star$; hence,
\[
\Pr\{\widehat\theta^{\mathrm{DiPPI}}
\in\operatorname{int}(\mathcal B_\rho)\}
=\Pr\{\|\widehat\theta^{\mathrm{DiPPI}}-\theta^\star\|<\rho\}
\longrightarrow1.
\]
}
On the event $\widehat\theta^{\mathrm{DiPPI}}\in\mathcal B_\rho$, a first-order Taylor expansion in integral form and Assumptions~\ref{assump:basic} and~\ref{assump:nuisance} give
\begin{align}
&\frac1n\sum_{i=1}^n
\|U_{\widehat\theta^{\mathrm{DiPPI}}}(X_i,Y_i)
-U_{\theta^\star}(X_i,Y_i)\|^2\notag\\
&\quad\leq
\|\widehat\theta^{\mathrm{DiPPI}}-\theta^\star\|^2
\frac1n\sum_{i=1}^n
\sup_{\xi\in\mathcal B_\rho}
\|\nabla_\theta U_\xi(X_i,Y_i)\|^2\notag\\
&=o_p(1).
\label{eq:variance-proof-score-l2}
\end{align}
{
Since $\omega_k/n_k=1/n$,
\eqref{eq:variance-proof-score-l2} and
\eqref{eq:variance-proof-labeled-correction} imply
\begin{align}
&\sum_{k=1}^3\omega_k\frac1{n_k}
\sum_{i\in\mathcal D_k}
\Bigg\|
\begin{aligned}
&U_{\widehat\theta^{\mathrm{DiPPI}}}(X_i,Y_i)
-\frac{\widehat M^k\widehat\mu^k(X_i,Q_i)}{1+n/N}-U_{\theta^\star}(X_i,Y_i)
+\frac{M_\mu\mu(X_i,Q_i)}{1+r}
\end{aligned}
\Bigg\|^2\notag\\
&\quad\leq
\frac2n\sum_{i=1}^n
\|U_{\widehat\theta^{\mathrm{DiPPI}}}(X_i,Y_i)
-U_{\theta^\star}(X_i,Y_i)\|^2\notag\\
&\qquad+2\sum_{k=1}^3\omega_k\frac1{n_k}
\sum_{i\in\mathcal D_k}
\left\|
\frac{\widehat M^k\widehat\mu^k(X_i,Q_i)}{1+n/N}
-\frac{M_\mu\mu(X_i,Q_i)}{1+r}
\right\|^2
=o_p(1).
\label{eq:variance-proof-labeled-summand}
\end{align}
Since every term on the left-hand side of
\eqref{eq:variance-proof-labeled-summand} is nonnegative and
$\omega_k^{-1}\to\pi_k^{-1}<\infty$, for each $k=1,2,3$,
\begin{align}
&\frac1{n_k}\sum_{i\in\mathcal D_k}
\left\|
U_{\widehat\theta^{\mathrm{DiPPI}}}(X_i,Y_i)
-\frac{\widehat M^k\widehat\mu^k(X_i,Q_i)}{1+n/N}
-U_{\theta^\star}(X_i,Y_i)
+\frac{M_\mu\mu(X_i,Q_i)}{1+r}
\right\|^2=o_p(1).
\label{eq:variance-proof-foldwise-labeled-l2}
\end{align}
Moreover,
\begin{align*}
&\E\left\|
U_{\theta^\star}(X,Y)-\frac{M_\mu\mu(X,Q)}{1+r}
\right\|^2\leq
2\E\|U_{\theta^\star}(X,Y)\|^2
+\frac{2\|M_\mu\|^2}{(1+r)^2}\E\|\mu(X,Q)\|^2
<\infty.
\end{align*}
Since $n_k/n\to\pi_k>0$ implies $n_k\to\infty$, the weak law of large
numbers therefore gives
\begin{equation}
\frac1{n_k}\sum_{i\in\mathcal D_k}
\left\|
U_{\theta^\star}(X_i,Y_i)
-\frac{M_\mu\mu(X_i,Q_i)}{1+r}
\right\|^2
\xrightarrow{p}
\E\left\|
U_{\theta^\star}(X,Y)-\frac{M_\mu\mu(X,Q)}{1+r}
\right\|^2<\infty,
\label{eq:variance-proof-labeled-limit-second-moment}
\end{equation}
Thus, the empirical second moment in
\eqref{eq:variance-proof-labeled-limit-second-moment} is $O_p(1)$.
The triangle inequality and
\eqref{eq:variance-proof-foldwise-labeled-l2} give
\begin{align}
&\frac1{n_k}\sum_{i\in\mathcal D_k}
\left\|
U_{\widehat\theta^{\mathrm{DiPPI}}}(X_i,Y_i)
-\frac{\widehat M^k\widehat\mu^k(X_i,Q_i)}{1+n/N}
\right\|^2\\
&\quad\leq
\frac2{n_k}\sum_{i\in\mathcal D_k}
\left\|
U_{\widehat\theta^{\mathrm{DiPPI}}}(X_i,Y_i)
-\frac{\widehat M^k\widehat\mu^k(X_i,Q_i)}{1+n/N}
-U_{\theta^\star}(X_i,Y_i)
+\frac{M_\mu\mu(X_i,Q_i)}{1+r}
\right\|^2\\
&\qquad+
\frac2{n_k}\sum_{i\in\mathcal D_k}
\left\|
U_{\theta^\star}(X_i,Y_i)
-\frac{M_\mu\mu(X_i,Q_i)}{1+r}
\right\|^2
=O_p(1).
\label{eq:variance-proof-labeled-fitted-second-moment}
\end{align}
Applying the empirical Cauchy--Schwarz inequality separately to the
empirical means and second moments, using
\eqref{eq:variance-proof-foldwise-labeled-l2},
\eqref{eq:variance-proof-labeled-limit-second-moment}, and
\eqref{eq:variance-proof-labeled-fitted-second-moment}, gives
\begin{align}
&\Bigg\|
\widehat{\Cov}_{\mathcal D_k}\left\{
U_{\widehat\theta^{\mathrm{DiPPI}}}(X_i,Y_i)
-\frac{\widehat M^k\widehat\mu^k(X_i,Q_i)}{1+n/N}
\right\}\\
&\qquad-
\widehat{\Cov}_{\mathcal D_k}\left\{
U_{\theta^\star}(X_i,Y_i)
-\frac{M_\mu\mu(X_i,Q_i)}{1+r}
\right\}
\Bigg\|_F=o_p(1).
\label{eq:variance-proof-labeled-covariance-replacement}
\end{align}
The ordinary law of large numbers gives
\begin{equation}
\widehat{\Cov}_{\mathcal D_k}\left\{
U_{\theta^\star}(X_i,Y_i)
-\frac{M_\mu\mu(X_i,Q_i)}{1+r}
\right\}
\xrightarrow{p}
\Cov\left\{
U_{\theta^\star}(X,Y)-\frac{M_\mu\mu(X,Q)}{1+r}
\right\}.
\label{eq:variance-proof-labeled-limit-covariance}
\end{equation}
Combining~\eqref{eq:variance-proof-labeled-covariance-replacement} and
\eqref{eq:variance-proof-labeled-limit-covariance}, and using
$\sum_{k=1}^3\omega_k=1$ and the fixed number of folds, proves
\eqref{eq:variance-proof-labeled-covariance}.
}

{We next prove
\eqref{eq:variance-proof-unlabeled-covariance}.} For the unlabeled
contribution, condition on the labeled data. {For
each fixed $k$ and every $\varepsilon>0$, conditional Markov's inequality
gives
\begin{align*}
&\Pr\!\left\{
\frac1N\sum_{j=n+1}^{n+N}
\|\widehat\mu^k(X_j,Q_j)-\mu(X_j,Q_j)\|^2>\varepsilon
\,\middle|\,\widehat\mu^k
\right\}\leq\frac1\varepsilon
\E\|\widehat\mu^k(X,Q)-\mu(X,Q)\|^2=o_p(1).
\end{align*}
Since the conditional probability is bounded by one,
\begin{align*}
&\Pr\!\left\{
\frac1N\sum_{j=n+1}^{n+N}
\|\widehat\mu^k(X_j,Q_j)-\mu(X_j,Q_j)\|^2>\varepsilon
\right\}\\
&\quad=
\E\!\left[
\Pr\!\left\{
\frac1N\sum_{j=n+1}^{n+N}
\|\widehat\mu^k(X_j,Q_j)-\mu(X_j,Q_j)\|^2>\varepsilon
\,\middle|\,\widehat\mu^k
\right\}
\right]\longrightarrow0.
\end{align*}
Combining this convergence with the ordinary law of large numbers and
$\E\|\mu(X,Q)\|^2<\infty$ gives
\begin{equation}
\frac1N\sum_{j=n+1}^{n+N}
\|\widehat\mu^k(X_j,Q_j)-\mu(X_j,Q_j)\|^2=o_p(1),
\qquad
\frac1N\sum_{j=n+1}^{n+N}\|\mu(X_j,Q_j)\|^2=O_p(1).
\label{eq:variance-proof-unlabeled-l2}
\end{equation}
Together with $\widehat M^k=M_\mu+o_p(1)$ and $n/N\to r$,
\eqref{eq:variance-proof-unlabeled-l2} gives
\begin{align}
&\frac1N\sum_{j=n+1}^{n+N}
\left\|
\frac{\widehat M^k\widehat\mu^k(X_j,Q_j)}{1+n/N}
-\frac{M_\mu\mu(X_j,Q_j)}{1+r}
\right\|^2=o_p(1).
\label{eq:variance-proof-foldwise-unlabeled-correction}
\end{align}
}
Since $\sum_{k=1}^3\omega_k=1$, convexity gives
\begin{align*}
&\frac1N\sum_{j=n+1}^{n+N}
\left\|
\frac1{1+n/N}\sum_{k=1}^3\omega_k
\widehat M^k\widehat\mu^k(X_j,Q_j)
-\frac1{1+r}M_\mu\mu(X_j,Q_j)
\right\|^2\\
&\quad\leq\sum_{k=1}^3\omega_k\frac1N\sum_{j=n+1}^{n+N}
\left\|
\frac{\widehat M^k\widehat\mu^k(X_j,Q_j)}{1+n/N}
-\frac{M_\mu\mu(X_j,Q_j)}{1+r}
\right\|^2.
\end{align*}
{By
\eqref{eq:variance-proof-foldwise-unlabeled-correction} and the fixed
number of folds, the right-hand side is $o_p(1)$. Hence}
\begin{equation}
\begin{split}
\frac1N\sum_{j=n+1}^{n+N}
\left\|
\frac1{1+n/N}\sum_{k=1}^3\omega_k
\widehat M^k\widehat\mu^k(X_j,Q_j)
-\frac1{1+r}M_\mu\mu(X_j,Q_j)
\right\|^2=o_p(1).
\end{split}
\label{eq:variance-proof-unlabeled-correction}
\end{equation}
{
The ordinary law of large numbers gives
\begin{equation}
\frac1N\sum_{j=n+1}^{n+N}
\left\|\frac1{1+r}M_\mu\mu(X_j,Q_j)\right\|^2=O_p(1).
\label{eq:variance-proof-unlabeled-limit-second-moment}
\end{equation}
The triangle inequality and
\eqref{eq:variance-proof-unlabeled-correction} give
\begin{align}
&\frac1N\sum_{j=n+1}^{n+N}
\left\|
\frac1{1+n/N}\sum_{k=1}^3\omega_k
\widehat M^k\widehat\mu^k(X_j,Q_j)
\right\|^2\\
&\quad\leq
\frac2N\sum_{j=n+1}^{n+N}
\left\|
\frac1{1+n/N}\sum_{k=1}^3\omega_k
\widehat M^k\widehat\mu^k(X_j,Q_j)
-\frac1{1+r}M_\mu\mu(X_j,Q_j)
\right\|^2\\
&\qquad+
\frac2N\sum_{j=n+1}^{n+N}
\left\|\frac1{1+r}M_\mu\mu(X_j,Q_j)\right\|^2
=O_p(1).
\label{eq:variance-proof-unlabeled-fitted-second-moment}
\end{align}
Applying the empirical Cauchy--Schwarz inequality to the empirical means
and second moments, using
\eqref{eq:variance-proof-unlabeled-correction},
\eqref{eq:variance-proof-unlabeled-limit-second-moment}, and
\eqref{eq:variance-proof-unlabeled-fitted-second-moment}, gives
\begin{align}
&\Bigg\|
\widehat{\Cov}_{\mathrm U}\left\{
\frac1{1+n/N}\sum_{k=1}^3\omega_k
\widehat M^k\widehat\mu^k(X_j,Q_j)
\right\}-
\widehat{\Cov}_{\mathrm U}\left\{
\frac1{1+r}M_\mu\mu(X_j,Q_j)
\right\}
\Bigg\|_F=o_p(1).
\label{eq:variance-proof-unlabeled-covariance-replacement}
\end{align}
The ordinary law of large numbers gives
\begin{equation}
\widehat{\Cov}_{\mathrm U}\left\{
\frac1{1+r}M_\mu\mu(X_j,Q_j)
\right\}
\xrightarrow{p}
\Cov\left\{\frac1{1+r}M_\mu\mu(X,Q)\right\}.
\label{eq:variance-proof-unlabeled-limit-covariance}
\end{equation}
Combining~\eqref{eq:variance-proof-unlabeled-covariance-replacement} and
\eqref{eq:variance-proof-unlabeled-limit-covariance} proves
\eqref{eq:variance-proof-unlabeled-covariance}.
}
The average over $k$ is taken before the covariance because all three estimators use the same unlabeled observations.

{
By the definition
\[
M_\mu=\Cov\{U_{\theta^\star}(X,Y),\mu(X,Q)\}
\Cov\{\mu(X,Q)\}^{-1},
\]
we have
\begin{align*}
\Cov\{U_{\theta^\star}(X,Y),M_\mu\mu(X,Q)\}
&=\Cov\{M_\mu\mu(X,Q),U_{\theta^\star}(X,Y)\}\\
&=\Cov\{M_\mu\mu(X,Q)\}\\
&=\Cov\{U_{\theta^\star}(X,Y),\mu(X,Q)\}
\Cov\{\mu(X,Q)\}^{-1}\\
&\quad\times\Cov\{\mu(X,Q),U_{\theta^\star}(X,Y)\}.
\end{align*}
Consequently,
\begin{align}
&\Cov\!\left\{
U_{\theta^\star}(X,Y)-\frac1{1+r}M_\mu\mu(X,Q)
\right\}
+r\Cov\!\left\{\frac1{1+r}M_\mu\mu(X,Q)\right\}\notag\\
&\quad=\Cov\{U_{\theta^\star}(X,Y)\}
-\frac1{1+r}
\Cov\{U_{\theta^\star}(X,Y),\mu(X,Q)\}
\Cov\{\mu(X,Q)\}^{-1}\notag\\
&\qquad\times\Cov\{\mu(X,Q),U_{\theta^\star}(X,Y)\}.
\label{eq:variance-proof-middle-matrix}
\end{align}
{
Combining~\eqref{eq:variance-proof-labeled-covariance},
\eqref{eq:variance-proof-unlabeled-covariance}, $n/N\to r$, and
\eqref{eq:variance-proof-middle-matrix} gives
\begin{align*}
&\sum_{k=1}^3\omega_k\widehat{\Cov}_{\mathcal D_k}
\left\{U_{\widehat\theta^{\mathrm{DiPPI}}}(X_i,Y_i)
-\frac{\widehat M^k\widehat\mu^k(X_i,Q_i)}{1+n/N}\right\}\\
&\quad+\frac nN\widehat{\Cov}_{\mathrm U}\left\{
\frac1{1+n/N}\sum_{k=1}^3\omega_k
\widehat M^k\widehat\mu^k(X_j,Q_j)\right\}\\
&\xrightarrow{p}
\Cov\{U_{\theta^\star}(X,Y)\}
-\frac1{1+r}
\Cov\{U_{\theta^\star}(X,Y),\mu(X,Q)\}
\Cov\{\mu(X,Q)\}^{-1}\\
&\qquad\times\Cov\{\mu(X,Q),U_{\theta^\star}(X,Y)\}.
\end{align*}
} Since $H_{\theta^\star}$ is nonsingular,
\[
\widehat H\xrightarrow{p}H_{\theta^\star}
\quad\Longrightarrow\quad
\widehat H^{-1}\xrightarrow{p}H_{\theta^\star}^{-1}.
\]
Therefore,
\[
\widehat\Sigma^{\mathrm{DiPPI}}
\xrightarrow{p}\Sigma_\mu^{\mathrm{DiPPI}},
\]
which proves~\eqref{eq:dippi-variance-consistency}.

For any fixed $a$ satisfying
$a^\top\Sigma_\mu^{\mathrm{DiPPI}}a>0$,
Theorem~\ref{thm:feasible-dippi} and
\eqref{eq:dippi-variance-consistency} give
\begin{align*}
\frac{\sqrt n\,a^\top
(\widehat\theta^{\mathrm{DiPPI}}-\theta^\star)}
{\{a^\top\Sigma_\mu^{\mathrm{DiPPI}}a\}^{1/2}}
&\xrightarrow{d}N(0,1),\\
\frac{a^\top\widehat\Sigma^{\mathrm{DiPPI}}a}
{a^\top\Sigma_\mu^{\mathrm{DiPPI}}a}
&\xrightarrow{p}1.
\end{align*}
In particular,
\[
\Pr\{a^\top\widehat\Sigma^{\mathrm{DiPPI}}a>0\}\longrightarrow1,
\qquad
\left\{
\frac{a^\top\widehat\Sigma^{\mathrm{DiPPI}}a}
{a^\top\Sigma_\mu^{\mathrm{DiPPI}}a}
\right\}^{1/2}\xrightarrow{p}1.
\]
Slutsky's theorem therefore proves~\eqref{eq:dippi-studentized-clt}.
Finally,
\begin{align*}
&\Pr\!\left\{
a^\top\theta^\star\in
a^\top\widehat\theta^{\mathrm{DiPPI}}
\ \pm\ z_{1-\alpha/2}
\left(\frac{a^\top\widehat\Sigma^{\mathrm{DiPPI}}a}{n}\right)^{1/2}
\right\}\\
&\quad=
\Pr\!\left\{
\left|
\frac{\sqrt n\,a^\top
(\widehat\theta^{\mathrm{DiPPI}}-\theta^\star)}
{\{a^\top\widehat\Sigma^{\mathrm{DiPPI}}a\}^{1/2}}
\right|\leq z_{1-\alpha/2}
\right\}
\longrightarrow1-\alpha,
\end{align*}
which proves the stated coverage. %
}
\end{proof}

%

\bibliographystyle{plainnat}
\bibliography{references}

\end{document}